\newtheorem{definition}{Definition}
\newtheorem{proposition}[definition]{Proposition}
\newtheorem{lemma}[definition]{Lemma}
\newtheorem{theorem}[definition]{Theorem}
\newtheorem{corollary}[definition]{Corollary}
\newtheorem{conjecture}[definition]{Conjecture}
\newtheorem{remark}[definition]{Remark}
\newtheorem{example}[definition]{Example}
\newtheorem{question}[definition]{Question}
\def\bcj{\begin{conjecture}}
\def\ecj{\end{conjecture}}
\def\bcr{\begin{corollary}}
\def\ecr{\end{corollary}}
\def\bd{\begin{definition}}
\def\ed{\end{definition}}
\def\bea{\begin{eqnarray}}
\def\eea{\end{eqnarray}}
\def\bem{\begin{enumerate}}
\def\eem{\end{enumerate}}
\def\bex{\begin{example}}
\def\eex{\end{example}}
\def\bim{\begin{itemize}}
\def\eim{\end{itemize}}
\def\bl{\begin{lemma}}
\def\el{\end{lemma}}
\def\bma{\begin{bmatrix}}
\def\ema{\end{bmatrix}}
\def\bpf{\begin{proof}}
\def\epf{\end{proof}}
\def\bpp{\begin{proposition}}
\def\epp{\end{proposition}}
\def\bqu{\begin{question}}
\def\equ{\end{question}}
\def\br{\begin{remark}}
\def\er{\end{remark}}
\def\bt{\begin{theorem}}
\def\et{\end{theorem}}
\def\squareforqed{\hbox{\rlap{$\sqcap$}$\sqcup$}}
\def\qed{\ifmmode\squareforqed\else{\unskip\nobreak\hfil
\penalty50\hskip1em\null\nobreak\hfil\squareforqed
\parfillskip=0pt\finalhyphendemerits=0\endgraf}\fi}
\def\endenv{\ifmmode\;\else{\unskip\nobreak\hfil
\penalty50\hskip1em\null\nobreak\hfil\;
\parfillskip=0pt\finalhyphendemerits=0\endgraf}\fi}
\newenvironment{proof}{\noindent \textbf{{Proof.~} }}{\qed}
\def\Dbar{\leavevmode\lower.6ex\hbox to 0pt
{\hskip-.23ex\accent"16\hss}D}
\def\url@leostyle{%
  \@ifundefined{selectfont}{\def\UrlFont{\sf}}{\def\UrlFont{\small\ttfamily}}}
\def\bcj{\begin{conjecture}}
\def\ecj{\end{conjecture}}
\def\bcr{\begin{corollary}}
\def\ecr{\end{corollary}}
\def\bd{\begin{definition}}
\def\ed{\end{definition}}
\def\bea{\begin{eqnarray}}
\def\eea{\end{eqnarray}}
\def\bem{\begin{enumerate}}
\def\eem{\end{enumerate}}
\def\bex{\begin{example}}
\def\eex{\end{example}}
\def\bim{\begin{itemize}}
\def\eim{\end{itemize}}
\def\bl{\begin{lemma}}
\def\el{\end{lemma}}
\def\bpf{\begin{proof}}
\def\epf{\end{proof}}
\def\bpp{\begin{proposition}}
\def\epp{\end{proposition}}
\def\bqu{\begin{question}}
\def\equ{\end{question}}
\def\br{\begin{remark}}
\def\er{\end{remark}}
\def\bt{\begin{theorem}}
\def\et{\end{theorem}}
\def\btb{\begin{tabular}}
\def\etb{\end{tabular}}
\newcommand{\nc}{\newcommand}
\def\a{\alpha}
\def\m{\mu}
\def\s{\sigma}
 \nc{\bbA}{\mathbb{A}} \nc{\bbB}{\mathbb{B}} \nc{\bbC}{\mathbb{C}}
 \nc{\bbD}{\mathbb{D}} \nc{\bbE}{\mathbb{E}} \nc{\bbF}{\mathbb{F}}
 \nc{\bbG}{\mathbb{G}} \nc{\bbH}{\mathbb{H}} \nc{\bbI}{\mathbb{I}}
 \nc{\bbJ}{\mathbb{J}} \nc{\bbK}{\mathbb{K}} \nc{\bbL}{\mathbb{L}}
 \nc{\bbM}{\mathbb{M}} \nc{\bbN}{\mathbb{N}} \nc{\bbO}{\mathbb{O}}
 \nc{\bbP}{\mathbb{P}} \nc{\bbQ}{\mathbb{Q}} \nc{\bbR}{\mathbb{R}}
 \nc{\bbS}{\mathbb{S}} \nc{\bbT}{\mathbb{T}} \nc{\bbU}{\mathbb{U}}
 \nc{\bbV}{\mathbb{V}} \nc{\bbW}{\mathbb{W}} \nc{\bbX}{\mathbb{X}}
 \nc{\bbZ}{\mathbb{Z}}
 \nc{\bA}{{\bf A}} \nc{\bB}{{\bf B}} \nc{\bC}{{\bf C}}
 \nc{\bD}{{\bf D}} \nc{\bE}{{\bf E}} \nc{\bF}{{\bf F}}
 \nc{\bG}{{\bf G}} \nc{\bH}{{\bf H}} \nc{\bI}{{\bf I}}
 \nc{\bJ}{{\bf J}} \nc{\bK}{{\bf K}} \nc{\bL}{{\bf L}}
 \nc{\bM}{{\bf M}} \nc{\bN}{{\bf N}} \nc{\bO}{{\bf O}}
 \nc{\bP}{{\bf P}} \nc{\bQ}{{\bf Q}} \nc{\bR}{{\bf R}}
 \nc{\bS}{{\bf S}} \nc{\bT}{{\bf T}} \nc{\bU}{{\bf U}}
 \nc{\bV}{{\bf V}} \nc{\bW}{{\bf W}} \nc{\bX}{{\bf X}}
 \nc{\bZ}{{\bf Z}}
\nc{\cA}{{\cal A}} \nc{\cB}{{\cal B}} \nc{\cC}{{\cal C}}
\nc{\cD}{{\cal D}} \nc{\cE}{{\cal E}} \nc{\cF}{{\cal F}}
\nc{\cG}{{\cal G}} \nc{\cH}{{\cal H}} \nc{\cI}{{\cal I}}
\nc{\cJ}{{\cal J}} \nc{\cK}{{\cal K}} \nc{\cL}{{\cal L}}
\nc{\cM}{{\cal M}} \nc{\cN}{{\cal N}} \nc{\cO}{{\cal O}}
\nc{\cP}{{\cal P}} \nc{\cQ}{{\cal Q}} \nc{\cR}{{\cal R}}
\nc{\cS}{{\cal S}} \nc{\cT}{{\cal T}} \nc{\cU}{{\cal U}}
\nc{\cV}{{\cal V}} \nc{\cW}{{\cal W}} \nc{\cX}{{\cal X}}
\nc{\cZ}{{\cal Z}}
\nc{\hA}{{\hat{A}}} \nc{\hB}{{\hat{B}}} \nc{\hC}{{\hat{C}}}
\nc{\hD}{{\hat{D}}} \nc{\hE}{{\hat{E}}} \nc{\hF}{{\hat{F}}}
\nc{\hG}{{\hat{G}}} \nc{\hH}{{\hat{H}}} \nc{\hI}{{\hat{I}}}
\nc{\hJ}{{\hat{J}}} \nc{\hK}{{\hat{K}}} \nc{\hL}{{\hat{L}}}
\nc{\hM}{{\hat{M}}} \nc{\hN}{{\hat{N}}} \nc{\hO}{{\hat{O}}}
\nc{\hP}{{\hat{P}}} \nc{\hR}{{\hat{R}}} \nc{\hS}{{\hat{S}}}
\nc{\hT}{{\hat{T}}} \nc{\hU}{{\hat{U}}} \nc{\hV}{{\hat{V}}}
\nc{\hW}{{\hat{W}}} \nc{\hX}{{\hat{X}}} \nc{\hZ}{{\hat{Z}}}
\nc{\hn}{{\hat{n}}}
\def\max{\mathop{\rm max}}
\newcommand{\ket}[1]{|#1\rangle}
\newcommand{\braket}[2]{\langle#1|#2\rangle}
\newcommand{\tbc}{\red{TO BE CONTINUED...}}
\newcommand{\red}{\textcolor{red}}
\def\Dbar{\leavevmode\lower.6ex\hbox to 0pt
{\hskip-.23ex\accent"16\hss}D}
\begin{document}

\title{The construction and local distinguishability of multiqubit unextendible product bases}

\date{\today}

\pacs{03.65.Ud, 03.67.Mn}

\author{Yize Sun}\email[]{sunyize@buaa.edu.cn}
\affiliation{School of Mathematical Sciences, Beihang University, Beijing 100191, China}

\author{Lin Chen}\email[]{linchen@buaa.edu.cn (corresponding author)}
\affiliation{School of Mathematical Sciences, Beihang University, Beijing 100191, China}
\affiliation{International Research Institute for Multidisciplinary Science, Beihang University, Beijing 100191, China}

\begin{abstract}
An important problem in quantum information is to construct multiqubit unextendible product bases (UPBs). By using the unextendible orthogonal matrices, we construct a $7$-qubit UPB of size 11. 
It solves an open problem in [Quantum Information Processing 19:185 (2020)].  
Next, we graph-theoretically show that the UPB is locally indistinguishable in the bipartite systems of two qubits and five qubits, respectively. It turns out that the UPB corresponds to a complete graph with 11  vertices constructed by three sorts of nonisomorphic graphs. Taking the graphs as product vectors, we show that they are in three different orbits up to local unitary equivalence. Moreover, we also present the number of sorts of nonisomorphic graphs of complete graphs of some known UPBs and their orbits. 
\end{abstract}


\maketitle


\section{Introduction}

Unextendible product bases (UPBs) are one of the most versatile objects in the study of the multiqubit positive-partial-transpose (PPT)
entangled states \cite{Bennett1998Unextendible,1998Quantum,1999Unextendible}, indecomposible positive maps \cite{2012A}, local discrimination \cite{AL01,dms03,Chen2013The} and genuinely entangled spaces \cite{br01,PhysRevLett.99.250405,Chen2018The}.
Multiqubit UPBs help reliably build quantum circuits and cryptography in experiments \cite{Dicarlo2010Preparation}. Theoretically, 
 multiqubit UPBs can be used to construct tight Bell inequalities without quantum violation \cite{2012Tight}. The studies of UPBs begin with the size of them \cite{AL01}. Since then, the existence of  UPBs with distinct sizes has received extensive attentions \cite{Fen06,2019Constructing,Wang2014Unextendible
 }. It is shown that  all 4-qubit UPBs are constructed \cite{Johnston_2014}. Recently, 7-qubit UPBs of size 10 
  have been constructed \cite{ISI:000534346500003} and an open problem is whether the 7-qubit UPBs of size
11 exist. 
We solve the problem by constructing such UPBs. This is the first motivation of this paper.  

The local distinguishability of a given set of states is an important problem connected with the local
operations and classical communications (LOCC) and many schemes have been widely considered \cite{2002Nonlocality,2002Local,2001Local}.  
It has been applied in various quantum information tasks, such as data hiding \cite{Matthews2009Distinguishability,2002Hiding},  quantum secret sharing \cite{Markham2012Graph} and protocols of teleportation \cite{1993Teleporting}.  Local orthogonal quantum states can always be distinguishable \cite{2004Distinguishability}. 
Great efforts have been devoted to the set of actions on the multipartite system under LOCC 
only \cite{AL01}. For example, any set of orthogonal product states in $2\times N$ is distinguishable by LOCC  \cite{1999Unextendible} and two orthogonal pure states can be distinguishable \cite{Jonathan2000Local}. 
A UPB is not distinguishable by LOCC, though they contain no entanglement \cite{PhysRevLett.82.5385}. 
As far as we know, there are few schemes of  distinguishing  UPBs by collective   measurements. We shall locally indistinguish the 7-qubit UPB of size 11 using a collective measurement over the bipartite system of any two and five qubits, respectively.
This is the second motivation of this paper. 
 
Graph theory may quite properly be regarded as an area of applied mathematics \cite{2006Graph} and has many applications in quantum information \cite{2015Graph,Duan2014No,Gravier2012Quantum,Majewski2014On,2009QUANTUM}. 
For example, it has been applied in qubit information systems of extremal black branes \cite{2015Graph}, no-signalling assisted zero-error capacity of quantum channels \cite{Duan2014No} and quantum secret sharing \cite{Gravier2012Quantum}. Ref. \cite{Chen2013The} has shown the minimum size of UPBs by using graph theory techniques. It provides an operational tool when dealing with unextendible product
bases, particularly in the qubit case. 
We regard product vectors of a UPB as vertices in a graph. The orthogonality of product vectors corresponds to the edge between vertices. If any two product vectors have exactly one orthogonal pair of local vectors then there is exactly one edge between any two vertices. It implies that the UPB corresponds to a complete graph in graph theory. We may investigate the properties of UPBs by using graph theory. This is the third motivation of this paper. 


 

In this paper, we construct a 7-qubit UPB of size 11  in Theorem \ref{thm:11times7} by investigating some general properties of it in Lemma \ref{le:submatrix} and \ref{le:pj}. 
The UPB turns out to be locally indistinguishable in Theorem \ref{thm:distinguish}. We also show a graph-theoretic proof of the same theorem in Lemma \ref{fig:graph} by using the orthogonality of each vertex with the remaining vertices in Fig. \ref{fig:uom117dl}. Next, we obtain that the 7-qubit UPB of size 11 in Theorem \ref{thm:11times7} corresponds to a complete graph with eleven vertices. By investigating the orthogonality of every graph in Fig. \ref{fig:uom117}, we present that the product vectors corresponding to them are in three different orbits up to local unitary (LU) equivalence in Lemma \ref{le:lu}. 
 Moreover, we investigate the connections between complete graphs and $q$-qubit UPBs of size $q+1$, where $q$ is odd. In Lemma 	\ref{le:eqodd}, we obtain that for any odd $q (\geq3)$, one of $q$-qubit UPBs of size $q+1$ corresponds to a complete graph with $(q+1)$ vertices. It is constructed with exactly one sort of nonisomorphic graphs. Similarly, by assuming every graph for constructing the complete graph as a product vector, the product vectors are in the same orbit up to LU equivalence and permutation of systems.  
 For some even cases, in Theorem \ref{le:eqsort}, we show that 4,6,8-qubit UPBs of size $6,8,11$ correspond to complete graphs, respectively. They can be constructed with three sorts of nonisomorphic graphs. We further show that the product vectors corresponding to the graphs in Fig. \ref{fig:uom64de}, \ref{fig:uom86de}, and \ref{fig:uom118de} are in three different orbits, respectively.
 
 
The rest of this paper is organized as follows. In section \ref{sect:iii}, we introduce the notations and facts on UPBs. We construct a concrete 7-qubit UPB of size 11 in Theorem \ref{thm:11times7} and  we apply it in Theorem \ref{thm:distinguish}. In section \ref{sect:complete}, we investigate connections between UPBs and complete graphs in Lemma \ref{le:eqodd} and Theorem \ref{le:eqsort}.  
Finally, we conclude in section \ref{con:117}.

\section{An example of $11\times7$ UOMs and local distinguishability}
\label{sect:iii}

The $p$-qubit pure quantum state is represented by a unit vector $\ket{v}\in(\mathcal{H}^2)^{\otimes p}$, which is called a {product state} if it can be decomposed in the following form:
\begin{eqnarray}
	\ket{v}=\ket{v_1}\otimes\ket{v_2}\otimes\cdots\otimes\ket{v_p}, \quad \ket{v_j}\in\mathbb{C}^2,\forall j.
\end{eqnarray}
The standard basis of $\mathcal{H}^2$ is $\{\ket{0},\ket{1}\}$ and we use $\{\ket{a},\ket{a'}\}$, $\{\ket{b},\ket{b'}\}$,$\{\ket{c},\ket{c'}\},\cdots$ to denote the orthonormal bases of $\mathcal{H}^2$ different from $\{\ket{0},\ket{1}\}$ and from each other. We also will sometimes find it useful to omit the tensor product symbol when discussing multi-qubit states. For example, we use $\ket{0,1,1,0}$ as a shorthand way to write $\ket{0}\otimes\ket{1}\otimes\ket{1}\otimes\ket{0}$.

A $p$-qubit UPB is a set $\mathcal{S}\subseteq(\mathcal{H}^2)^{\otimes p}$ satisfying the following three
properties:

(i) every $\ket{v}\in\mathcal{S}$ is a product state;

(ii) $\braket{v}{w}=0$ for all $\ket{v}\neq\ket{w}\in\mathcal{S}$; 

(iii) all product states $\ket{z}\notin\mathcal{S}$, then there exists $\ket{v}\in\mathcal{S}$ such that $\braket{v}{z}\neq0$.

So a UPB is a set of orthogonal product states such that there is no product state orthogonal to every member of the set. 



One can verify that the maximum size of a $p$-qubit UPB is $2^p$ since the standard basis forms a UPB. It is well-known that there are no nontrivial $p$-qubit UPBs when $p\leq2$ \cite{Bennett1998Unextendible}.  
A nontrivial $p$-qubit UPB means that one whose size is strictly less than $2^p$.  We refer to the 
size of a UPB as the number of states in the UPB.
We review the minimum size of $p$-qubit UPBs found in \cite{Johnston2013The}.

\begin{lemma}
	\label{le:functionp}
	Denote a function $f(p)$ as the minimum size of $p$-qubit UPB, then  	
	
	(i) if $p$ is odd then $f(p) = p + 1$;
	
	(ii) if $p= 4$ or $p\equiv2$ (mod 4) then $f(p) = p + 2$;
	
	(iii) if $p = 8$ then $f(p) = p + 3$;
	
	(iv) otherwise, $f(p) = p + 4$.
\end{lemma}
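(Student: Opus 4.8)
The plan is to translate the three UPB axioms into a purely combinatorial problem about edge-decompositions of the complete graph $K_n$ and then to settle the resulting design-existence question separately for each residue class of $p$. First I would record the elementary reduction: to each qubit coordinate $j\in\{1,\dots,p\}$ associate the graph $G_j$ on the $n=|\mathcal S|$ states whose edges join the pairs $\ket{v}\neq\ket{w}$ with $\braket{v_j}{w_j}=0$. Because $\bbC^2$ is two-dimensional, the states sharing a common direction in coordinate $j$ split into complementary (orthogonal) pairs of classes, so each $G_j$ is a disjoint union of complete bipartite graphs, and axiom (ii) forces $G_1\cup\cdots\cup G_p=K_n$. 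Axiom (iii) becomes the assertion that one cannot choose a single direction-class in each coordinate so that the $p$ chosen classes cover all $n$ states. The trivial lower bound $f(p)\ge p+1$ then drops out: if $n\le p$ we may label each state by a distinct coordinate, which covers all states and hence yields an extension.

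For the odd case I would exhibit a matching size-$(p+1)$ construction. When $p$ is odd, $n=p+1$ is even, and $K_{p+1}$ admits a $1$-factorization into exactly $p$ perfect matchings $M_1,\dots,M_p$; assigning coordinate $j$ to $M_j$ and orienting each matching edge by an orthonormal basis of $\bbC^2$ produces a set satisfying axioms (i) and (ii). The only genuine point is axiom (iii): I would take the $p$ local bases in ``general position'' (in the spirit of the $\{\ket{a},\ket{a'}\},\{\ket{b},\ket{b'}\},\dots$ notation of the paper) so that no single-direction-per-coordinate selection covers every vertex of a matching's classes at once; this rules out extension and gives $f(p)=p+1$ for odd $p$.

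The heart of the lemma, and the step I expect to be the main obstacle, is the lower bound for even $p$, where one must show that size $p+1$ (and, for $p\equiv0\pmod 4$ with $p\notin\{4,8\}$, even $p+2$ and $p+3$) is impossible. At the minimum size one may assume each pair is orthogonal in exactly one coordinate, so the $G_j$ partition $E(K_n)$ and a Graham--Pollak-type bound applies: edge-disjoint complete bipartite subgraphs partitioning $K_n$ must number at least $n-1$. The obstruction is a parity phenomenon, since for even $p$ the count $n=p+1$ is odd and $K_n$ has no $1$-factorization; I would make this quantitative by tracking, for each vertex, the parity of its degrees inside the bipartite components across the $p$ coordinates and showing that this count clashes with the non-covering requirement from axiom (iii). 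Separating $p\equiv2\pmod 4$ from $p\equiv0\pmod 4$ according to the $2$-adic structure of $n$ and $p$ is what forces the extra uncovered vertices and pins down the thresholds $p+2$, $p+3$, $p+4$ together with the exceptional values $p=4,8$.

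Finally I would supply the matching upper-bound constructions realizing each threshold: explicit size-$(p+2)$ families for $p\equiv2\pmod 4$, the size-$6$ example for $p=4$ and the size-$11$ example for $p=8$ (the latter dovetailing with the $7$-qubit, size-$11$ object built later in this paper), and size-$(p+4)$ constructions for the remaining $p\equiv0\pmod4$. Each is checked against axioms (i) and (ii) directly and certified unextendible through the non-covering criterion of the first paragraph. Combining these constructions with the parity lower bounds of the preceding paragraph yields the stated value of $f(p)$ in all four cases.
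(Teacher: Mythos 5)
First, a point of comparison: the paper does not prove this lemma at all --- it is quoted directly from \cite{Johnston2013The}, so there is no in-paper argument to measure yours against. Judged on its own terms, your sketch gets the standard scaffolding right: the reduction to orthogonality graphs $G_j$ that are disjoint unions of complete bipartite graphs covering $K_n$, the trivial bound $f(p)\ge p+1$ via assigning each of $n\le p$ states its own coordinate, and the $1$-factorization of $K_{p+1}$ as the source of the odd-$p$ construction (this is exactly the GenShifts-type $(q+1)\times q$ UOM reproduced in Appendix C of this paper). But the entire content of parts (ii)--(iv) --- the lower bounds $p+2$, $p+3$, $p+4$ for the even residue classes and the identification of $p=4$ and $p=8$ as exceptions --- is left as a declaration of intent (``I would make this quantitative by tracking \dots the parity'') rather than an argument. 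A generic parity obstruction cannot by itself separate $p+2$ from $p+3$ from $p+4$, nor single out $p=4$ and $p=8$; in \cite{Johnston2013The} this requires a detailed analysis of the admissible multiplicity patterns within each column (in the spirit of Lemmas \ref{le:submatrix} and \ref{le:pj} of this paper) together with explicit matching constructions, including the exceptional size-$11$ object for $p=8$.

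Two further steps would fail as written. Your appeal to a Graham--Pollak-type bound points in the wrong direction: that theorem lower-bounds the number of bicliques needed to partition $E(K_n)$, so even under the (unjustified) assumption that each $G_j$ is a single complete bipartite graph it would yield $p\ge n-1$, i.e.\ an \emph{upper} bound on $n$, whereas the lemma needs a lower bound on $n$ in terms of $p$. And in the odd case, ``taking the local bases in general position'' does not by itself certify axiom (iii); one still needs the pigeonhole argument that any candidate extension, being orthogonal to $p+1$ states across only $p$ coordinates, must in some single coordinate be orthogonal to two linearly independent local vectors of $\bbC^2$, which is impossible --- and this requires checking that the two states colliding in that coordinate actually carry independent local vectors there. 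So the proposal is a reasonable outline of where a proof would live, but every nontrivial claim of the lemma remains unproved.
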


We recall the concept of unextendible orthogonal matrix (UOM) \cite{ISI:000534346500003}. 
Denote product vectors of a $n$-qubit UPB of size $m$ as row vectors of a $m\times n$ matrix. The matrix is known as the UOM of the UPB, and its rows are orthogonal. For the orthogonal qubit vectors $\ket{a}$ and $\ket{a'}$, denote them as the vector variables $a$ and $a'$ in UOMs. For example, the three-qubit UPB $\{\ket{0,0,0},\ket{1,b,c},\ket{a,1,c'},\ket{a',b',1}\}$ can be expressed as the $4\times3$ UOM
\begin{eqnarray}
	\bma
	0&0&0\\
	1&b&c\\
	a&1&c'\\
	a'&b'&1
	\ema.
\end{eqnarray}

In the following, 
we present a concrete $11\times7$ UOM in Theorem \ref{thm:11times7}. Since an $11\times7$ UOM corresponds to a 7-qubit UPB of size 11, its product vectors are orthogonal.

\begin{theorem}
	\label{thm:11times7}
	There exists an $11\times7$ UOM
	\begin{eqnarray}
	\label{eq:11times7uom}
\bma
a_{1,1}&a_{1,2}&a_{1,3}&a_{1,4}&a_{1,5}&a_{1,6}&a_{1,7}\\
a_{1,1}&a_{2,2}&a_{2,3}&a_{2,4}&a_{1,5}'&a_{2,6}&a_{2,7}\\	a_{1,1}'&a_{2,2}&a_{1,3}&a_{3,4}&a_{3,5}&a_{3,6}&a_{3,7}\\
a_{1,1}'&a_{1,2}&a_{2,3}&a_{4,4}&a_{4,5}&a_{4,6}&a_{3,7}'\\
a_{4,1}&a_{1,2}'&a_{5,3}&a_{2,4}'&a_{5,5}&a_{3,6}'&a_{5,7}\\
a_{4,1}&a_{1,2}'&a_{1,3}&a_{3,4}'&a_{6,5}&a_{2,6}'&a_{5,7}'\\
a_{4,1}'&a_{7,2}&a_{2,3}'&a_{7,4}&a_{3,5}'&a_{1,6}'&a_{7,7}\\
a_{4,1}'&a_{7,2}'&a_{2,3}'&a_{3,4}'&a_{6,5}&a_{8,6}&a_{1,7}'\\
a_{9,1}&a_{7,2}'&a_{1,3}'&a_{4,4}'&a_{5,5}'&a_{8,6}'&a_{2,7}'\\
a_{9,1}&a_{7,2}&a_{1,3}'&a_{7,4}'&a_{4,5}'&a_{2,6}'&a_{5,7}'\\
a_{9,1}'&a_{2,2}'&a_{5,3}'&a_{1,4}'&a_{6,5}'&a_{4,6}'&a_{7,7}'
\ema.
\end{eqnarray}
\end{theorem}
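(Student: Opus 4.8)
The plan is to verify the two defining properties of a UOM for the displayed matrix: that its eleven rows form a set of mutually orthogonal $7$-qubit product vectors, and that this set is unextendible. The orthogonality is the routine part. For each of the $\binom{11}{2}=55$ pairs of rows I would locate a column in which the two entries constitute an orthogonal pair, i.e. one is a variable $a_{k,l}$ and the other its prime $a_{k,l}'$; any such column already makes the corresponding single-qubit overlap vanish and hence forces $\braket{v}{w}=0$. In accord with the complete-graph picture used later in the paper, I would in fact confirm that every pair shares \emph{exactly} one orthogonal column, so that the eleven product vectors realize the $55$ edges of the complete graph on $11$ vertices. This step is purely structural and holds for any assignment in which a symbol and its prime denote genuinely orthogonal qubit vectors.

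The main obstacle is unextendibility: showing that no product state $\ket{z}=\ket{z_1}\otimes\cdots\otimes\ket{z_7}$ is orthogonal to all eleven rows. The governing observation is that within each factor $\mathbb{C}^2$ the orthogonal complement of $\ket{z_j}$ is one-dimensional, so a fixed $\ket{z_j}$ renders $\ket{z}$ orthogonal, through column $j$, precisely to those rows whose $j$-th entry equals $\ket{z_j}^\perp$. Provided the distinct symbols in a given column denote directions in general position (so that $\ket{z_j}^\perp$ can coincide with at most one entry value), it follows that if we assign to each row a column witnessing its orthogonality to $\ket{z}$, then all rows assigned to a common column must carry identical entries in that column. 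Unextendibility thus reduces to the combinatorial assertion that the eleven rows cannot be covered under this constraint, which I would establish using the structural facts of Lemma~\ref{le:submatrix} and Lemma~\ref{le:pj}.

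Concretely, I would argue by contradiction. Supposing an orthogonal product state $\ket{z}$ exists, I would examine the columns one at a time, exploiting the repetition pattern of the variables — for instance $a_{1,1}$ appearing in rows $1$--$2$, $a_{1,1}'$ in rows $3$--$4$, $a_{4,1}$ in rows $5$--$6$, $a_{9,1}$ in rows $9$--$10$ in the first column, together with the analogous groupings in the remaining six columns — to bound how many rows each local choice $\ket{z_j}$ can cover and to record which rows each choice necessarily leaves uncovered. Propagating these constraints through the seven columns, I expect every admissible assignment to strand at least one row with no available killing column, which is the desired contradiction. The delicate point is to keep the case analysis finite and transparent rather than enumerating all column choices by brute force; this is exactly where Lemma~\ref{le:submatrix} and Lemma~\ref{le:pj}, which control column by column how many and which rows a single local choice can annihilate, are intended to prune the casework.
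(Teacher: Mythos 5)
Your overall strategy coincides with the paper's: check that every pair of rows has exactly one orthogonal column (so the row sum $\sum_j p_j=10+10+11+6+6+6+6=55=\binom{11}{2}$), then suppose a product state $\ket{z}=\ket{z_1,\dots,z_7}$ is orthogonal to all eleven rows, observe that the rows annihilated through column $j$ must all carry the same entry there (since $\ket{z_j}^{\perp}$ is one-dimensional and distinct symbols are distinct directions), and derive a contradiction from the multiplicity pattern of the columns. This is exactly the reduction the paper performs.

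However, the decisive step is missing, and it is not closable by the counting you gesture at. The maximal number of rows a single local choice can kill is $3$ in column $3$ (via $a_{1,3}$) and $2$ in every other column, so the naive upper bound on coverage is $3+6\cdot 2=15>11$; unextendibility therefore does not follow from multiplicities alone, and your statement that you ``expect every admissible assignment to strand at least one row'' is precisely the claim that must be proved. The paper closes this by splitting on whether $z_3=a_{1,3}'$: if yes, the remaining eight rows must be covered by three further columns exhibiting a $7\times 3$ submatrix with column multiplicities $(3,2,2)$ in distinct rows (the pattern $\mathcal{F}_1$); if no, one needs an $8\times 4$ submatrix with four multiplicity-two entries in eight distinct rows (the pattern $\mathcal{F}_2$). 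It then locates every multiplicity-two entry of the specific matrix \eqref{eq:11times7uom} (e.g.\ $a_{4,1}',a_{9,1},a_{7,2},a_{7,2}'$ all sit inside a common $4\times 2$ block, and the repeated entries of columns $4$--$7$ all meet row $R_6$) and shows the required disjoint-row configurations cannot be assembled. Without this explicit verification the proof is incomplete. A secondary point: Lemmas \ref{le:submatrix} and \ref{le:pj} are \emph{necessary} conditions derived under the hypothesis that $A$ is already a UOM; they guide the construction of the candidate matrix but cannot by themselves certify that this particular matrix is unextendible, so leaning on them to ``prune the casework'' in the verification is not logically sufficient -- one must argue directly about the concrete entries, as the paper does.
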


We show the proof of above theorem in Appendix \ref{eq:pftheom6}. 
In addition, we investigate the $n$-qubit UPBs of size $m$, where $n=7$ and $8$. By known facts \cite{ ISI:000534346500003} and Theorem \ref{thm:11times7}, we conclude the findings that $7$-qubit UPBs of size $m$ exist when  $m=8,10-122,124,128$. From \cite{Chen_2018} and  \cite{ ISI:000534346500003}, we obtain that 8-qubit UPBs of size $m$ exist when $m=11-250,252,256$. So we have exhausted all sizes of $n=7$ and 8. 


Since the $11\times7$ UOM  in Eq. \eqref{eq:11times7uom} corresponds to a 7-qubit UPB of size 11 $\mathcal{K}$, we consider whether it can construct a set of orthogonal bases such that it is locally distinguishable. Because  any set of orthogonal product states in $2\times N$ is distinguishable \cite{1999Unextendible}, we assume that the product vectors of $\mathcal{K}$ are bipartite states of systems $\otimes_{j\in\{1,\cdots,7\}\backslash\{6\}} C_j$ and $C_6$. For the convenience of readers, we show it in Fig. \ref{fig:reducible1}.

	\begin{figure}[htb]
	\includegraphics[scale=0.9,angle=0]{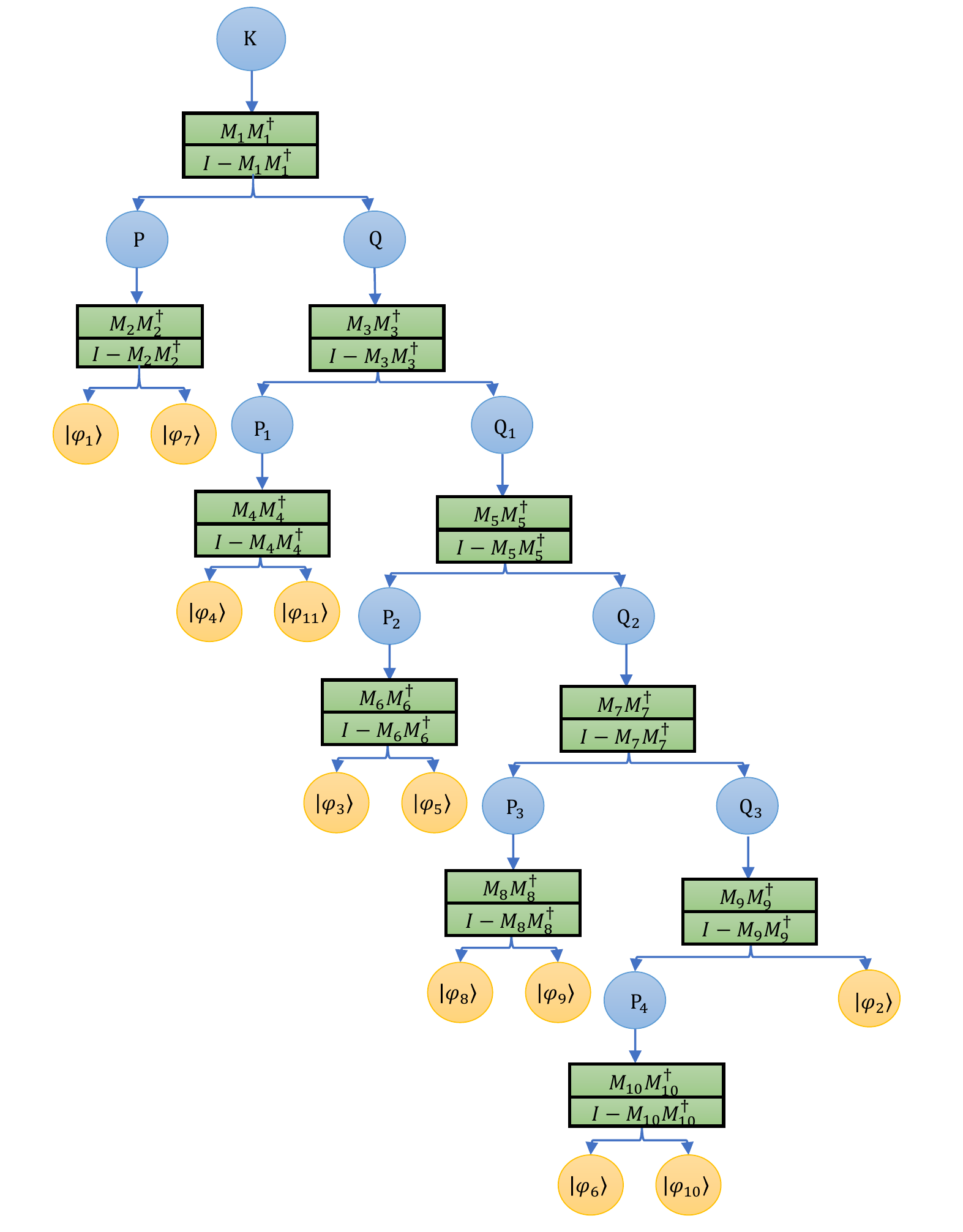}
	\caption{This is the description of steps. Consider every element $\ket{\varphi_i}, i\in\{1,\cdots,11\}$ of the set $\mathcal{K}$ as a bipartite state of systems $\otimes_{j\in\{1,\cdots,7\}\backslash\{6\}} C_j$ and $C_6$. The set $\mathcal{K}$ is locally distinguishable by using measurements $\{M_jM_j^\dag,I-M_jM_j^\dag\}$, where $j\in\{1,\cdots,10\}$.}
	\label{fig:reducible1}
\end{figure}

\begin{figure}[htb]
	\includegraphics[scale=0.9,angle=0]{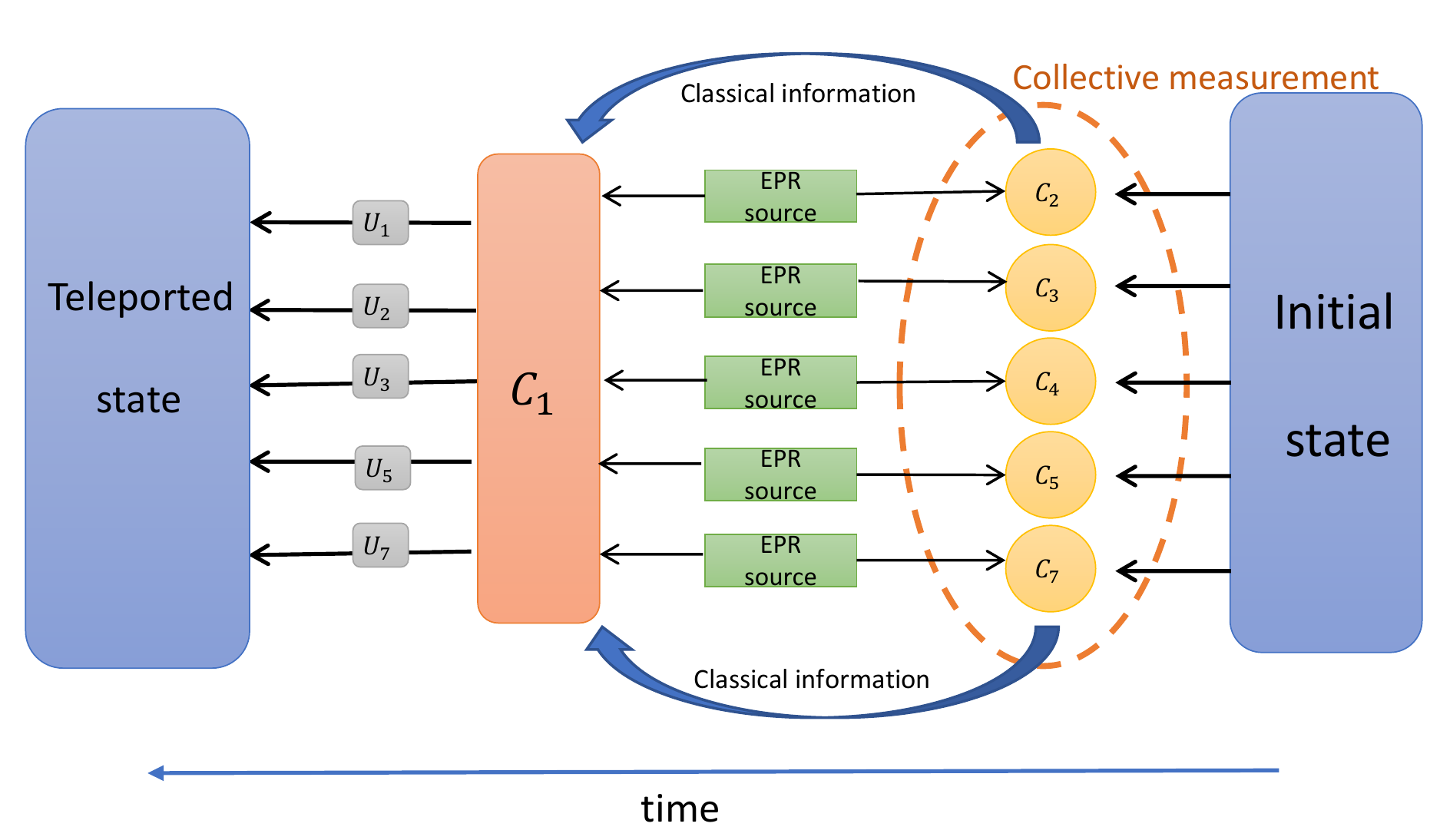}
	\caption{Quantum teleportation is a way of utilizing the
		entangled EPR pair in order to send initial state to Bob, with only a small overhead of classical
		communication. 
		Systems $C_2,C_3,C_4,C_5,C_7$ can teleport their particles to $C_1$ such that $C_1$ owns the six systems $C_1,\cdots,C_5,C_7$. Then we can use collective measurement on them, it costs entanglement 5*1 ebit =5 ebits.} 
	\label{fig:teleport}
\end{figure}

Physically, the collective measurement of system $C_1,\cdots, C_5,C_7$ can be realized using quantum teleportation in Fig. \ref{fig:teleport}. Suppose $C_1$ and $C_2$ share a two-qubit maximally entangled state. Using quantum teleportation, $C_2$ may teleport his particle to $C_1$. Similarly, using the two-qubit  maximally entangled state between $C_1$ and $C_j$ for $j=3,4,5,7$, $C_j$ may teleport its particle to $C_1$. Then $C_1$ owns the six systems $C_1,\cdots,C_5,C_7$. So it can perform measurement on them. This collective measurement cost entanglement 5 ebits. We further consider whether $\mathcal{K}$ is locally distinguishable  in the bipartite systems of two qubits and five qubits, in the following theorem.



\begin{theorem}
	\label{thm:distinguish}
	Suppose the product states of the $11\times7$ UOM in Eq. \eqref{eq:11times7uom} 
	are on systems $C_1,\cdots,C_7$. Regard them as bipartite states of systems $\otimes_{j\in\{m,n\}}C_j$  and $\otimes_{j\in\{1,\cdots,7\}\backslash\{m,n\}} C_j$. Then they are locally indistinguishable.
\end{theorem}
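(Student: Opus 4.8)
The plan is to work directly across the chosen bipartite cut and exploit the elementary fact that orthogonality is necessary for perfect LOCC discrimination. Fix the partition into $A=\otimes_{j\in\{m,n\}}C_j\cong\mathbb{C}^4$ and $B=\otimes_{j\in\{1,\dots,7\}\backslash\{m,n\}}C_j\cong\mathbb{C}^{32}$, and write each product state of $\mathcal{K}$ as $\ket{\varphi_i}=\ket{\alpha_i}_A\ket{\beta_i}_B$. Since any two rows of the UOM in Eq.~\eqref{eq:11times7uom} are orthogonal in exactly one column, every pair $\{i,j\}$ is orthogonal either through $A$ (its unique orthogonal column is $m$ or $n$) or through $B$ (otherwise); I record this as a $2$-colouring of the edges of the complete graph on the eleven states. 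Because any branch of a successful protocol that ever renders two of the eleven post-measurement states non-orthogonal can no longer finish, it suffices to study a single orthogonality-preserving local measurement by either party and show it carries no which-state information.

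Concretely, if $A$ applies a measurement with element $E=M^\dg M$, the updated vectors $\ket{\alpha_i}\mapsto M\ket{\alpha_i}$ remain orthogonal only if $\bra{\alpha_i}E\ket{\alpha_j}=0$ for every $A$-coloured edge $\{i,j\}$ (the $B$-coloured pairs stay orthogonal automatically). First I would show that the only Hermitian $E$ meeting all of these constraints is a scalar on $\lin\{\ket{\alpha_i}\}$; as the eleven two-qubit vectors read off from columns $m,n$ span $\mathbb{C}^4$, this gives $E\propto I$, so $A$'s outcome statistics are state-independent and the ensemble is unchanged up to a local isometry. This is a small explicit linear-algebra check: a Hermitian $4\times4$ matrix tested against the list of $A$-coloured orthogonality relations extracted from the two columns.

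The hard part is the $B$ side. There $\lin\{\ket{\beta_i}\}$ may be up to eleven-dimensional inside $\mathbb{C}^{32}$, and a dimension count shows the linear constraints $\bra{\beta_i}E\ket{\beta_j}=0$ over the $B$-coloured edges need not force $E\propto I$, so a nontrivial orthogonality-preserving measurement by the five-qubit party is not ruled out in one step. I would therefore argue recursively: any such nontrivial $B$-measurement must either spoil the orthogonality of some $A$-coloured pair---whose $\beta$-parts are non-orthogonal and hence tightly constrain $E$---or send the eleven states to an LU-equivalent copy of the same configuration, so that no finite sequence of rounds can ever separate them. Carrying this out needs the full orthogonality pattern of Eq.~\eqref{eq:11times7uom}, which is exactly where the graph-theoretic bookkeeping of Fig.~\ref{fig:uom117dl} and Lemma~\ref{fig:graph} enters: for each vertex one uses its ten incident edges to pin down the admissible measurements.

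Finally, since the claim is asserted for an arbitrary pair $\{m,n\}$, I would organise the colouring argument to run uniformly in $m$ and $n$, verifying the $A$-side rigidity and the $B$-side recursion for each of the $\binom{7}{2}$ cuts (many of which coincide under the symmetries of the construction). Assembling these forces every round of every LOCC protocol to be noninformative, so eleven distinct states can never be identified and $\mathcal{K}$ is locally indistinguishable across every $2\,|\,5$ cut. I expect the combinatorial verification on the five-qubit side, carried out uniformly over all cuts, to be the principal source of difficulty.
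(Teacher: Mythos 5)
Your strategy is genuinely different from the paper's, and it is worth noting the contrast before the gap. The paper never analyses orthogonality-preserving measurements: it invokes the fact that local distinguishability implies local reducibility, i.e.\ that the eleven states could be split into two nonempty subsets $P$ and $Q$ with every element of $P$ orthogonal to every element of $Q$ on one side of the cut, and then rules out every such partition by finite combinatorics on the orthogonality pattern --- on the two-qubit side each state is orthogonal to at most five others (so $|P|+|Q|\leq 10<11$), and on the five-qubit side the cases $|P|=1,\dots,5$ are excluded one by one using the ``exactly one orthogonal column per pair'' structure and the multiplicity data of Eq.~\eqref{eq:p3}. Your plan instead aims at the stronger statement that every orthogonality-preserving POVM element on either side is proportional to the identity (local irreducibility across the cut), in the style of nonlocality-without-entanglement arguments.

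The genuine gap is the five-qubit side, and you essentially concede it. Your dichotomy --- that a nontrivial orthogonality-preserving measurement by $B$ ``must either spoil the orthogonality of some $A$-coloured pair or send the eleven states to an LU-equivalent copy of the same configuration'' --- is not justified. A Hermitian $E$ restricted to the at most $11$-dimensional span of the $\ket{\beta_i}$ has $121$ real parameters, while the $B$-coloured constraints $\bra{\beta_i}E\ket{\beta_j}=0$ number at most $2\cdot 43$ real equations (and fewer for other cuts), so there is a large space of non-scalar solutions that preserve every orthogonality relation, deform the configuration to something not LU-equivalent to the original, and therefore break your recursion; nothing in the proposal shows such a branch cannot eventually separate the states, and unbounded-round LOCC descent is exactly the delicate point such proofs must control. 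The $A$-side rigidity claim ($E\propto I$ forced by the $A$-coloured constraints alone, uniformly over all $\binom{7}{2}$ choices of $\{m,n\}$) is likewise asserted rather than verified, and with only $p_m+p_n$ constrained pairs it is not automatic. As written the proposal is a plan whose central step is missing; the paper's partition argument is precisely the device that closes the case you leave open, and some such replacement (or a completed irreducibility analysis) is needed before this can count as a proof.
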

 
We show the proof in Appendix \ref{pf:theorem7}. We also give a graph-theoretic proof of Theorem \ref{thm:distinguish} in the next section. Naturally, from Theorem \ref{thm:distinguish}, we may further consider whether the product states of the $11\times7$ UOM in \eqref{eq:11times7uom} are locally distinguishable in the bipartite systems of three qubits and four qubits. Since a strongly nonlocal orthogonal set cannot be locally distinguishable in every bipartition \cite{Shi_2020}, an open problem is whether the product states of the $11\times7$ UOM in \eqref{eq:11times7uom} have strong nonlocality in multipartite system. Because any set of orthogonal product states in $2\times N$ is distinguishable by LOCC  \cite{1999Unextendible}, we exclude it when we investigate the open problem.


\section{Complete graphs of UOMs }
\label{sect:complete}

In this section, we show a graph-theoretic proof of Theorem \ref{thm:distinguish}. We investigate the properties of  the complete graph in Fig. \ref{fig:uom117} corresponding to the 7-qubit UPB of size 11 in \eqref{eq:11times7uom}. By investigating the orthogonality of every graphs in Fig. \ref{fig:uom117}, if we assume that  every graph corresponds to a product vector, then  we further consider the number of orbits up to LU equivalence. Next, for some other known UPBs in Lemma \ref{le:functionp}, we investigate connections between them and the  complete graphs corresponding to them in Lemma \ref{le:eqodd} and Theorem \ref{le:eqsort}.

 We recall the definition of complete graphs. 
A graph with exactly one edge between two vertices is called a {complete graph}. 
We know that the number of edges of a complete graph with $n$ vertices is $\frac{n(n-1)}{2}$. For a UOM, let one row correspond to one vertex. If two rows have orthogonal pairs of vector variables, then there exist edges between the two vertices. The number of orthogonal pairs of any two rows equals to the number of edges between any two vertices. So for a UOM, if any two rows have exactly one orthogonal pair, then we obtain that there exists only one edge between any two vertices. Hence, the $11\times7$ UOM in Theorem \ref{thm:11times7} whose any two rows has one orthogonal pair corresponds to a complete graph with eleven vertices. We denote $V_i$ as the vertex corresponding to the $i$-th row, $1\leq i\leq11$. Then  we present the graphs corresponding to cases of orthogonal pairs of column $1-7$ in Fig. \ref{fig:uom117} and  the complete graph with eleven vertices in Fig. \ref{fig:uom117de}.


\begin{figure}[htb]
	\includegraphics[scale=0.5,angle=0]{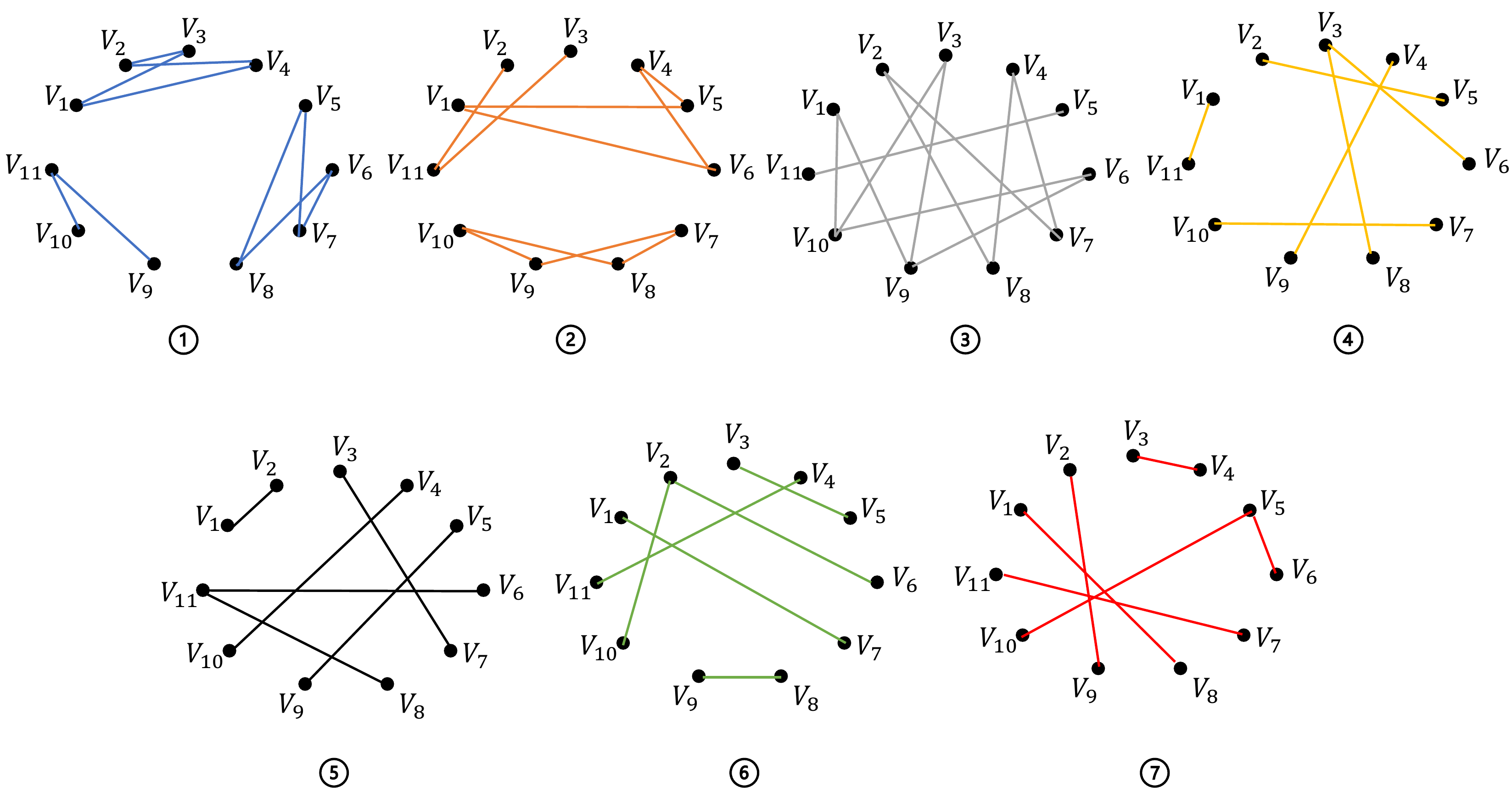}
	\caption{Graph \textcircled{1} - \textcircled{7} correspond to the cases of orthogonal pairs in column $1-7$ of the $11\times7$ UOM of Theorem \ref{thm:11times7}, respectively.} 
	\label{fig:uom117}
\end{figure}

\begin{figure}[htb]
	\includegraphics[scale=0.5,angle=0]{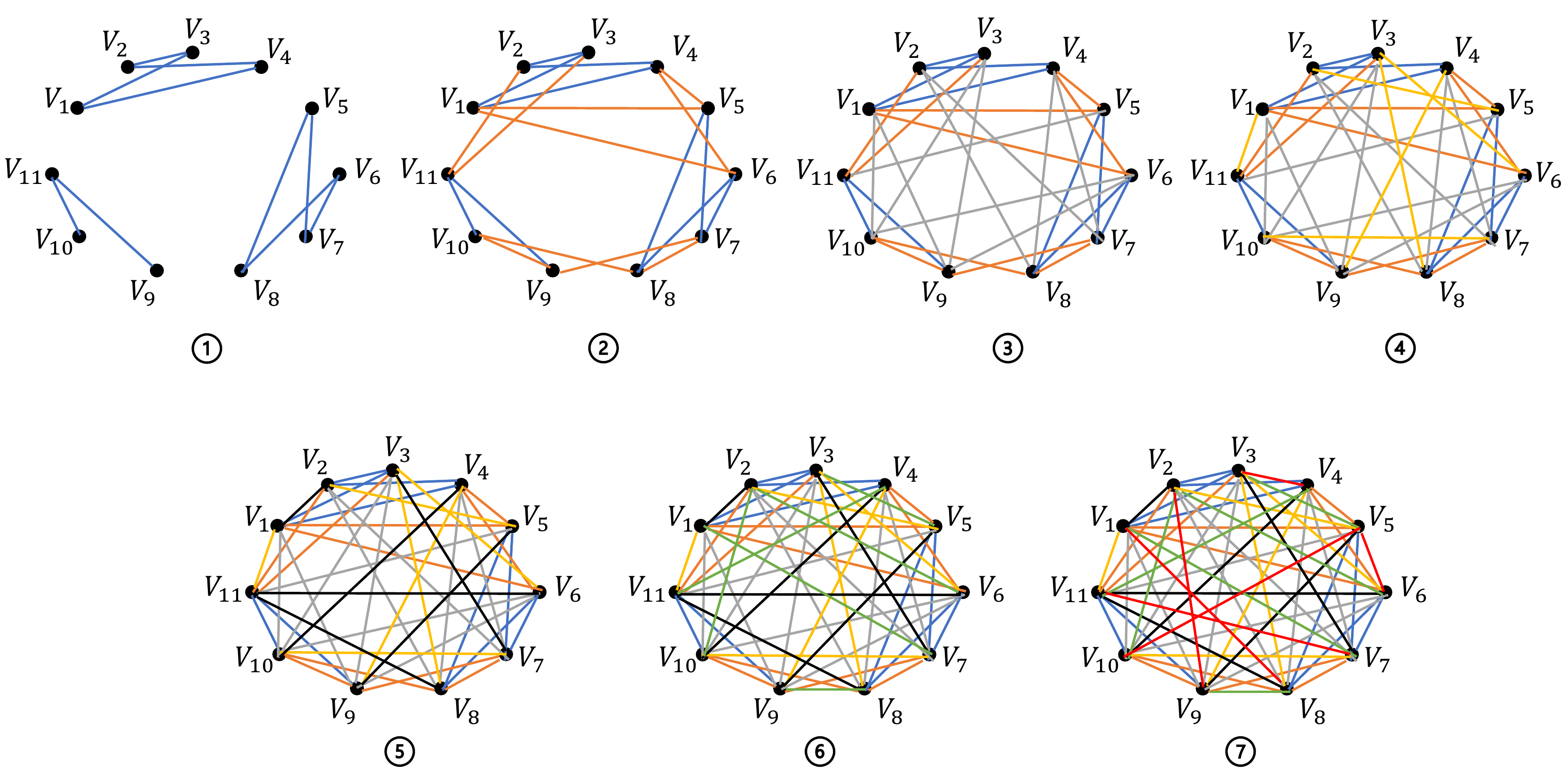}
	\caption{Graph \textcircled{1} - \textcircled{7} show the cases of orthogonal pairs in the first column $1-7$. The $11\times7$ UOM of Theorem \ref{thm:11times7} corresponds to the complete graph in \textcircled{3}. 
	} 
	\label{fig:uom117de}
\end{figure}

Since the $11\times7$ UOM corresponds to a 7-qubit UPB $\mathcal{K}$ of size 11 in \eqref{eq:11times7uom}, we obtain that $\mathcal{K}$ corresponds to the complete graph in Fig. \ref{fig:uom117de}. It implies that every product vector of it corresponds to a vertex of the graph. The representation of the graph in Fig. \ref{fig:uom117} can help verify whether the product vectors of $\mathcal{K}$ are locally distinguishable as follows. 

\begin{lemma}
	\label{fig:graph}
	Theorem \ref{thm:distinguish} can be also obtained by using the graph-theoretic technique.
\end{lemma}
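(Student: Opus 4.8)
The plan is to reprove Theorem~\ref{thm:distinguish} by translating the local indistinguishability criterion into a purely combinatorial statement about the complete graph in Fig.~\ref{fig:uom117de}. Recall that for a UPB the key obstruction to local distinguishability via LOCC is the absence of a nontrivial orthogonality-preserving measurement: if one party cannot perform a nondisturbing measurement (one whose outcomes do not destroy the mutual orthogonality of the remaining states), then the set is indistinguishable. First I would make this correspondence explicit: when we regard each product state $\ket{\varphi_i}$ as a bipartite state across $\otimes_{j\in\{m,n\}}C_j$ versus the complement, an edge of the graph records which column (i.e.\ which qubit) supplies the orthogonal pair between rows $V_i$ and $V_k$. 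The edges are partitioned by column into the seven subgraphs $\textcircled{1}$--$\textcircled{7}$ of Fig.~\ref{fig:uom117}, and an orthogonality relation survives within the two-qubit party $\{m,n\}$ exactly when the corresponding edge lies in subgraph $\textcircled{m}$ or $\textcircled{n}$.

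The central step is to show that for every choice of the two columns $\{m,n\}$, the edges drawn from subgraphs $\textcircled{m}\cup\textcircled{n}$ cannot be ``separated'' by the remaining edges in a way that would permit a nontrivial measurement on one side. Concretely, I would argue that each of the eleven vertices still has at least one incident edge coming from the five-qubit complement $\{1,\dots,7\}\setminus\{m,n\}$, so that no vertex is isolated when the two distinguished columns are deleted. This means that the only measurement operators on the two-qubit subsystem compatible with preserving all surviving orthogonalities on the complement must act as multiples of the identity on each relevant local subspace; following the standard argument used in the proof of Theorem~\ref{thm:distinguish} (and in Lemma~\ref{le:submatrix}), this forces the measurement to be trivial, which is precisely the condition for local indistinguishability. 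The graph thus encodes, vertex by vertex, the orthogonality of each row with the remaining rows as summarized in Fig.~\ref{fig:uom117}, and the nonexistence of an isolating cut is what rules out a product measurement.

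I would carry this out by fixing an arbitrary pair $\{m,n\}$, reading off from Fig.~\ref{fig:uom117} which edges belong to $\textcircled{m}$ and $\textcircled{n}$, and then verifying the connectivity/nonisolation property of the complementary edge set. Since the graph has only eleven vertices and a fixed, explicit edge structure, this reduces to a finite check over the $\binom{7}{2}=21$ choices of $\{m,n\}$; by the symmetry of the construction many of these cases coincide, so only a few genuinely distinct configurations need inspection. The main obstacle will be establishing the precise dictionary between ``a nondisturbing bipartite measurement exists'' and ``the complementary subgraph admits an isolating structure,'' i.e.\ making rigorous that the combinatorial nonisolation condition is both necessary and sufficient for triviality of the measurement. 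Once that equivalence is pinned down, the remaining verification is routine case analysis on the explicit graph, and the conclusion of Theorem~\ref{thm:distinguish} follows directly from the graph-theoretic data rather than from the matrix computation of Appendix~\ref{pf:theorem7}.
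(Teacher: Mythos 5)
Your overall translation of the problem into the edge-colored complete graph is the same as the paper's: each pair of rows is joined by exactly one edge, labeled by the unique column supplying the orthogonal pair, and local reducibility across the cut $\{m,n\}$ versus its complement amounts to splitting the eleven vertices into two nonempty sets $P,Q$ all of whose crossing edges carry labels from one side of the cut. Where your proposal goes wrong is in the combinatorial criterion you substitute for this. You propose to check only that, after deleting the edges labeled $m$ and $n$, no vertex of the remaining graph is isolated. Non-isolation is necessary (it rules out the case $|P|=1$) but far from sufficient: a subgraph in which every vertex has degree at least one can still be disconnected (two disjoint edges already do it), and a disconnected complementary subgraph is exactly what would furnish a nontrivial reducing partition $P,Q$. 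What must actually be verified is that the subgraph formed by the edges labeled by the complementary five columns is \emph{connected} (and likewise that the subgraph labeled by $\{m,n\}$ is connected, when one asks about reducibility on the five-qubit side). This is what the paper's proof does: starting from $\ket{\varphi_1}$ it repeatedly applies ``the edge $ij$ lies outside the measured side, hence $i$ and $j$ must lie in the same part'' to sweep all eleven vertices into $P$, which is precisely a connectivity propagation; in the $\{6,7\}$ case it further checks that the would-be complement $\{\ket{\varphi_7},\ket{\varphi_8}\}$ is forced back into $P$ through $\ket{\varphi_2}$. Your weaker criterion would leave the proof with a genuine hole at exactly the step where the conclusion is supposed to follow.

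A second, smaller issue: you frame indistinguishability through the nonexistence of a nontrivial orthogonality-preserving measurement and concede that the ``dictionary'' between such measurements and your graph condition still has to be established. The paper avoids this entirely by invoking the characterization of local reducibility as the existence of a splitting $P\cup Q=\Phi$ with cross-orthogonality on one side, after which the whole argument is purely about cuts in the graph. If you replace the non-isolation check by a connectivity check of both column-labeled subgraphs for each of the $\binom{7}{2}$ bipartitions, and adopt the $P/Q$ characterization rather than the measurement-operator route, your argument becomes essentially the paper's.
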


\begin{figure}[htb]
	\includegraphics[scale=0.7,angle=0]{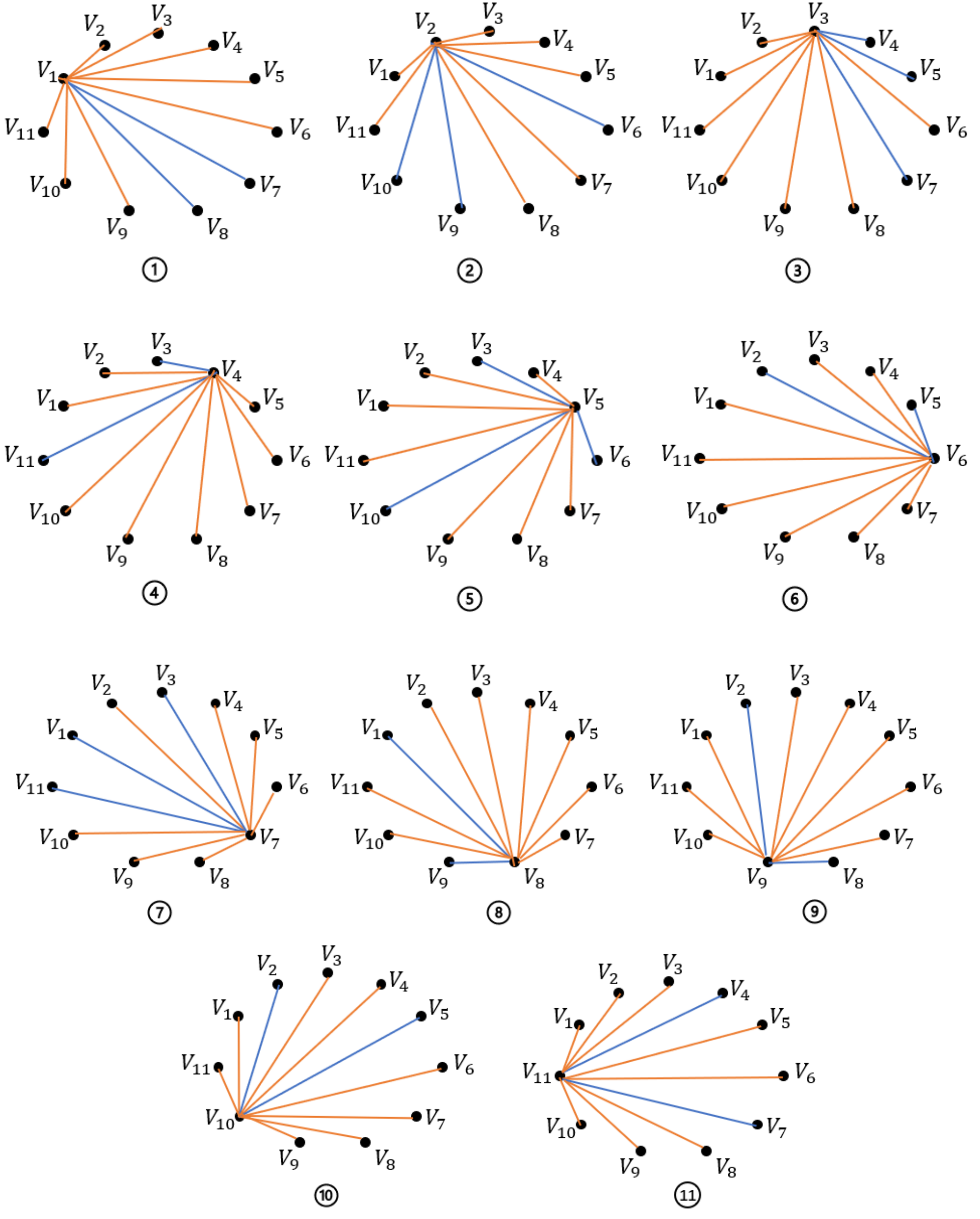}
	\caption{Every $\ket{\varphi_i}$ corresponds to vertex $V_i$. The figure shows that every $\ket{\varphi_i}$ is orthogonal to the other on two systems by different colors. If any two product vectors $\ket{\varphi_i}$ and $\ket{\varphi_k}$ are orthogonal on system $\otimes_{j\in\{6,7\}}C_j$, then we may assume that there exists a blue edge between $V_i$ and $V_k$. If any two product vectors $\ket{\varphi_i}$ and $\ket{\varphi_k}$ are orthogonal on system $\otimes_{j\in\{1,\cdots,5\}} C_j$, then we may assume that there exists a yellow edge between $V_i$ and $V_k$, where $i,k\in\{1,\cdots,11\}$.} 
	\label{fig:uom117dl}
\end{figure}

\begin{proof}
We may assume that every product vector $\ket{\varphi_i}$ is on systems $C_1,\cdots, C_7$, where $i=1,\cdots,11$. We consider every $\ket{\varphi_i}$ of $\mathcal{K}$ as a bipartite states of systems $\otimes_{j\in\{m,n\}}C_j$  and $\otimes_{j\in\{1,\cdots,7\}\backslash\{m,n\}} C_j$. By investigating orthogonality of any two product vectors on systems $\otimes_{j\in\{m,n\}}C_j$ or $\otimes_{j\in\{1,\cdots,7\}\backslash\{m,n\}} C_j$, we present it by using different colors. We start with an example when $m=6$ and $n=7$, respectively. Then we show it in Fig. \ref{fig:uom117dl}. 

We may assume that 
the product vectors of $\mathcal{K}$ are locally distinguishable on system $\otimes_{j\in\{6,7\}}C_j$. Then we obtain that $\mathcal{K}$ can be splitted into two disjoint subsets $P$ and $Q$ and the elements of $P$ are orthogonal to that of $Q$. In Fig. \ref{fig:uom117dl}, we obtain that $\ket{\varphi_1}$ is orthogonal to $\ket{\varphi_i}, i\in\{2,\cdots,11\}\backslash \{7,8\}$ on system $\otimes_{j\in\{1,\cdots,5\}} C_j$. Because any two product vectors have exactly one orthogonal pair, we obtain that  $\ket{\varphi_1}$ is not orthogonal to $\ket{\varphi_i}, i\in\{2,\cdots,11\}\backslash \{7,8\}$ on system $\otimes_{j\in\{6,7\}}C_j$. So we obtain that  $\ket{\varphi_1}$ and  $\ket{\varphi_i}, i\in\{2,\cdots,11\}\backslash \{7,8\}$ are in the same set. Without loss of generality, we may assume that they are in set $P$. Since $\ket{\varphi_1}$ is orthogonal to $\ket{\varphi_7}$ and $\ket{\varphi_8}$, the remaining elements of $P$ are orthogonal to  $\ket{\varphi_7}$ or $\ket{\varphi_8}$. However, 
from the $\textcircled{2}$ in Fig. \ref{fig:uom117dl}, we obtain that it is not orthogonal to $\ket{\varphi_7}$ and $\ket{\varphi_8}$. Hence, we obtain that $\mathcal{K}$ are locally indistinguishable on system $\otimes_{j\in\{6,7\}}C_j$. 

Next, we assume that the product vectors of $\mathcal{K}$ are locally distinguishable on system $\otimes_{j\in\{1,\cdots,5\}} C_j$. Then we obtain that $\mathcal{K}$ can be splitted into two disjoint subsets $P$ and $Q$ on system $\otimes_{j\in\{1,\cdots,5\}} C_j$. Without loss of generality, we may assume that $\ket{\varphi_1}$ is in $P$. From $\textcircled{1}$ of Fig. \ref{fig:uom117dl}, we obtain that $\ket{\varphi_1}$ is orthogonal to $\ket{\varphi_7}$ and $\ket{\varphi_8}$ on system $\otimes_{j\in\{6,7\}}C_j$. Since any two product vectors have exactly one orthogonal pair, we obtain that $\ket{\varphi_1}$ is not orthogonal to $\ket{\varphi_7}$ and $\ket{\varphi_8}$. So they are in set $P$. From $\textcircled{7}$ and $\textcircled{8}$ of Fig. \ref{fig:uom117dl}, we obtain that $\ket{\varphi_3}, \ket{\varphi_9}, \ket{\varphi_{11}}$ are in set $P$. Similarly, from $\textcircled{3}, \textcircled{9}$ and $\textcircled{11}$ of Fig. \ref{fig:uom117dl}, we obtain that $\ket{\varphi_4}, \ket{\varphi_5}, \ket{\varphi_{2}}$ are in set $P$. From $\textcircled{2}$ of Fig. \ref{fig:uom117dl}, we obtain that $\ket{\varphi_6}, \ket{\varphi_{10}}$ are in set $P$. So we obtain that all product vectors of $\mathcal{K}$ are in set $P$. It implies that there exists no product vector in set $Q$. It violates  the fact that the product vectors of $\mathcal{K}$ are locally distinguishable on system $\otimes_{j\in\{1,\cdots,5\}} C_j$. 

Similarly, by using the same methods,  when $m,n\in\{1,\cdots,7\}$ and $m\neq n$, we obtain that the product vectors of $\mathcal{K}$ are locally indistinguishable on systems $\otimes_{j\in\{m,n\}}C_j$ or $\otimes_{j\in\{1,\cdots,7\}\backslash\{m,n\}} C_j$. Thus the result is the same as that in Theorem \ref{thm:distinguish} by using a graph-theoretic proof. 
\end{proof}

Furthermore, we investigate more results about some known UOMs in Lemma \ref{le:functionp} and their corresponding complete graphs. 
In graph theory, suppose $G(V,E)$ and $G_1(V_1,E_1)$ are two graphs, where $V,V_1$ are two sets of vertices and $E,E_1$ are sets of edges in graph $G, G_1$, respectively. If there exists a bijective $M:V\rightarrow V_1$ such that for all $X,Y\in V$, the edge $XY\in E$ is equivalent to the edge $M(X)M(Y)\in E_1$, then $G$ and $G_1$ are {isomorphic}. Then in Fig.  \ref{fig:uom117}, \textcircled{1} and \textcircled{2} are two {isomorphic} graphs. Similarly,  \textcircled{4}, \textcircled{5}, \textcircled{6},  \textcircled{7} are isomorphic. 
Thus the $11\times7$ UOM in Theorem \ref{thm:11times7} is constructed by three sorts of nonisomorphic graphs. We assume that every graph in Fig. \ref{fig:uom117} corresponds to a product vector. In the following lemma, we show the number of orbits of them up to LU equivalence. 

\begin{lemma}
	\label{le:lu}
	Suppose every graph in Fig. \ref{fig:uom117} corresponds to  a product vector in $(\mathcal{H}^2)^{\otimes11}$. Then they are in three different orbits up to LU equivalence.
\end{lemma}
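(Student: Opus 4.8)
The plan is to make the correspondence ``graph $\mapsto$ product vector'' fully explicit and then reduce the orbit count to the graph-isomorphism classification already begun before the lemma. Reading column $i$ of the UOM in Eq.~\eqref{eq:11times7uom} down its eleven rows produces eleven single-qubit states $\ket{u_{1,i}},\dots,\ket{u_{11,i}}$, and I attach to the graph of column $i$ the product vector $\ket{\Phi_i}=\ket{u_{1,i}}\otimes\cdots\otimes\ket{u_{11,i}}\in(\mathcal{H}^2)^{\otimes11}$, whose $r$-th tensor factor is placed at vertex $V_r$. By construction an edge of that graph is precisely an orthogonal pair among these eleven factors, so the graph is exactly the orthogonality structure of $\ket{\Phi_i}$.

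First I would record the three shapes directly from the columns: \textcircled{1} and \textcircled{2} are each a disjoint union $K_{2,2}\sqcup K_{2,2}\sqcup K_{2,1}$ (ten edges), \textcircled{3} is $K_{3,2}\sqcup K_{2,2}\sqcup K_{1,1}$ (eleven edges), and \textcircled{4}--\textcircled{7} are each $K_2\sqcup K_2\sqcup K_2\sqcup K_2\sqcup P_3$ (six edges). This matches the isomorphisms $\textcircled{1}\cong\textcircled{2}$ and $\textcircled{4}\cong\textcircled{5}\cong\textcircled{6}\cong\textcircled{7}$ already noted. Their degree sequences are $1^2 2^9$, $1^2 2^7 3^2$ and $1^{10}2^1$ (equivalently $10$, $11$, $6$ edges), so the three shapes are pairwise non-isomorphic.

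For the equivalences inside a class I would turn a graph isomorphism into a structure-preserving unitary map. If $\pi$ is an isomorphism from the graph of column $i$ to that of column $j$, then relabelling the eleven systems by $\pi$ carries the equal/orthogonal pattern of column $i$ onto that of column $j$; composing this relabelling with single-qubit unitaries $U_r\in\Un(2)$ chosen so that $U_r\ket{u_{\pi^{-1}(r),i}}=\ket{u_{r,j}}$ sends $\ket{\Phi_i}$ to $\ket{\Phi_j}$ up to a global phase. Hence \textcircled{1},\textcircled{2} lie in one orbit and \textcircled{4}--\textcircled{7} lie in another.

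The remaining, and genuinely hardest, point is that the three classes are \emph{distinct} orbits, and here I would stress a subtlety: a completely unconstrained local unitary maps any product vector to any other, so the invariant cannot be read off $\ket{\Phi_i}$ naively. The correct invariant is the orthogonality graph itself, and the admissible operations here (a permutation of the systems together with local unitaries respecting the orthonormal-pair relations among the factors) preserve which pairs of factors are orthogonal, hence preserve the graph up to isomorphism and in particular its edge count. Since \textcircled{1}, \textcircled{3} and \textcircled{4} carry $10$, $11$ and $6$ edges respectively, no admissible operation relates vectors from different classes, so the seven product vectors fall into exactly three orbits. I expect the main obstacle to be precisely the justification of this invariant --- arguing that the equivalence relevant to the lemma is the orthogonality-preserving one rather than the unconstrained local-unitary equivalence --- after which the within-class constructions and the across-class separation are both routine.
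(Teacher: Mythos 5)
Your proposal is correct and follows essentially the same route as the paper's proof: read each column of the UOM as a product vector in $(\mathcal{H}^2)^{\otimes 11}$, realize the isomorphisms $\textcircled{1}\cong\textcircled{2}$ and $\textcircled{4}\cong\textcircled{5}\cong\textcircled{6}\cong\textcircled{7}$ by a permutation of the eleven systems composed with single-qubit unitaries (the paper permutes $\ket{\phi_2}$ and then uses $U=(U_1)^{\otimes4}\otimes(U_2)^{\otimes4}\otimes(U_3)^{\otimes3}$), and separate the three classes by the number of orthogonal pairs, $10$, $11$ and $6$. Your graph identifications ($K_{2,2}\sqcup K_{2,2}\sqcup K_{2,1}$, $K_{3,2}\sqcup K_{2,2}\sqcup K_{1,1}$, four $K_2$'s plus a $P_3$) and the degree sequences are all correct. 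The one point where you go beyond the paper is the caveat about the across-class separation, and you are right to insist on it: under unconstrained local unitaries every product vector of $(\mathcal{H}^2)^{\otimes 11}$ is equivalent to every other, so the lemma is only meaningful for operations that respect the orthogonality pattern of the eleven factors. The paper makes this restriction silently — it writes the connecting unitary in block form and then invokes ``the same number of orthogonal pairs'' as the separating invariant without explaining why that number is invariant; your explicit identification of the orthogonality graph (preserved by system permutations and orthogonality-respecting local unitaries) as the relevant invariant supplies the justification the paper omits.
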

\begin{proof}
From Fig. \ref{fig:uom117}, it implies that the orthogonality of every column in Eq. \eqref{eq:11times7uom}. If we regard every column of Eq. \eqref{eq:11times7uom} as a product vector $\ket{\phi_i}\in(\mathcal{H}^2)^{\otimes {11}},i=1,\cdots,7$, then we may assume that 
\begin{eqnarray}
	\label{eq:11product}
\ket{\phi_1}:=	\ket{a_{1,1},a_{1,1},a_{1,1}',a_{1,1}',a_{4,1},a_{4,1},a_{4,1}',a_{4,1}',a_{9,1},a_{9,1},a_{9,1}'},\nonumber
&&
\\
\ket{\phi_2}:=\ket{a_{1,2},a_{2,2},a_{1,2},a_{2,2},a_{1,2}',a_{1,2}',a_{7,2},a_{7,2}',a_{7,2}',a_{7,2},a_{2,2}'},
\nonumber
&&
\\
\ket{\phi_3}:=\ket{a_{1,3},a_{2,3},a_{1,3},a_{2,3},a_{5,3},a_{1,3},a_{2,3}',a_{2,3}',a_{1,3}',a_{1,3}',a_{5,3}'},
\nonumber
&&
\\
\ket{\phi_4}:=\ket{a_{1,4},a_{2,4},a_{3,4},a_{4,4},a_{2,4}',a_{3,4}',a_{7,4},a_{3,4}',a_{4,4}',a_{7,4}',a_{1,4}'},
\nonumber
&&
\\
\ket{\phi_5}:=\ket{a_{1,5},a_{1,5}',a_{3,5},a_{4,5},a_{5,5},a_{6,5},a_{3,5}',a_{6,5},a_{5,5}',a_{4,5}',a_{6,5}'},
\nonumber
&&
\\
\ket{\phi_6}:=\ket{a_{1,6},a_{2,6},a_{3,6},a_{4,6},a_{3,6}',a_{2,6}',a_{1,6}',a_{8,6},a_{8,6}',a_{2,6}',a_{4,6}'},
\nonumber
&&
\\
\ket{\phi_7}:=\ket{a_{1,7},a_{2,7},a_{3,7},a_{3,7}',a_{5,7},a_{5,7}',a_{7,7},a_{1,7}',a_{2,7}',a_{5,7}',a_{7,7}'}.
\end{eqnarray}

By using the permutation of systems on $\ket{\phi_2}$, we can rewrite $\ket{\phi_2}$ as
\begin{eqnarray}
	\label{eq:widetilde}
\ket{\widetilde{{\phi_2}}}=\ket{a_{1,2},a_{1,2},a_{1,2}',a_{1,2}',a_{7,2},a_{7,2},a_{7,2}',a_{7,2}',a_{2,2},a_{2,2},a_{2,2}'}.	
\end{eqnarray}  From the orthogonal pairs of $\ket{\phi_1}$ and $\ket{\widetilde{{\phi_2}}}$ in \eqref{eq:11product} and \eqref{eq:widetilde}, there exist $2\times2$ unitary matrices $U_1,U_2,U_3$ such that $U_1\ket{a_{1,1}}=\ket{a_{1,2}}, U_1\ket{a_{1,1}'}=\ket{a_{1,2}'}, U_2\ket{a_{4,1}}=\ket{a_{7,2}}, U_2\ket{a_{4,1}'}=\ket{a_{7,2}'}, U_3\ket{a_{9,1}}=\ket{a_{2,2}}, U_3\ket{a_{9,1}'}=\ket{a_{2,2}'}$.  We obtain that  there exists a product unitary operation $U$ such that $U\ket{\phi_1}=\ket{\phi_2}$, where $U=(U_1)^{\otimes4}\otimes (U_2)^{\otimes4}\otimes (U_3)^{\otimes3}$. From the definition of the LU equivalence, we obtain that $\ket{\phi_1}, \ket{\widetilde{\phi_2}}$ are LU equivalent and they are in the same orbit. Similarly, by using the  permutation of systems on $\ket{\phi_5},\ket{\phi_6},\ket{\phi_7}$, we may assume that they are
\begin{eqnarray}
	\ket{\widetilde{\phi_5}}=\ket{a_{1,5},a_{3,5},a_{6,5}',a_{5,5},a_{3,5}',a_{6,5},a_{4,5}',a_{6,5},a_{5,5}',a_{4,5},a_{1,5}'},
	&&
	\\
	\ket{\widetilde{\phi_6}}=\ket{a_{1,6},a_{3,6},a_{2,6},a_{4,6},a_{3,6}',a_{2,6}',a_{8,6},a_{2,6}',a_{4,6}',a_{8,6}',a_{1,6}'},
	&&
	\\
	\ket{\widetilde{\phi_7}}=\ket{a_{1,7},a_{2,7},a_{5,7},a_{3,7},a_{2,7}',a_{5,7}',a_{7,7},a_{5,7}',a_{3,7}',a_{7,7}',a_{1,7}'}.
\end{eqnarray}
Then we obtain that there exist three product unitary operations such that  $\ket{\widetilde{\phi_5}},\ket{\widetilde{\phi_6}},\ket{\widetilde{\phi_7}}$ are LU equivalent to $\ket{\widetilde{\phi_4}}$. So they are in the same orbit. We consider the same number of orthogonal pairs of these product vectors in \eqref{eq:11product}. 
By using LU equivalent, we obtain that $\ket{\phi_1},\ket{\phi_2}$ are LU equivalent. Similarly,   $\ket{\phi_4},\ket{\phi_5},\ket{\phi_6},\ket{\phi_7}$ are LU equivalent and  $\ket{\phi_3}$ is LU equivalent to itself. Hence, we obtain that the product vectors in \eqref{eq:11product} are in three different orbits up to LU equivalence and using permutation of systems.
\end{proof}

For some known UOMs in Lemma \ref{le:functionp}, we consider the number of sorts of {nonisomorphic} graphs for constructing the complete graphs corresponding to them in the following lemma. 
\begin{lemma}
	\label{le:eqodd}
For any odd $q\geq3$, one of $(q+1)\times q$ UOMs corresponds to a complete graph with $(q+1)$ vertices. It is constructed with exactly one sort of nonisomorphic graphs. 
\end{lemma}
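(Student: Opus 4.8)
The plan is to exhibit, for each odd $q\ge 3$, a concrete $(q+1)\times q$ UOM built from a $1$-factorization of the complete graph $K_{q+1}$, generalizing the $4\times3$ UOM introduced earlier, and then to read off its graph structure. Since $q$ is odd, $q+1$ is even, so by the classical round-robin construction $K_{q+1}$ decomposes into $q$ perfect matchings $M_1,\dots,M_q$, each consisting of $\tfrac{q+1}{2}$ disjoint edges, with every pair of vertices lying in exactly one $M_j$. I index the $q+1$ rows of the UOM by the vertices of $K_{q+1}$ and the $q$ columns by the matchings; in column $j$ I realize each edge $\{r,s\}\in M_j$ by assigning the two rows $r,s$ an orthogonal pair of vector variables drawn from a fresh basis, choosing all these bases generically.

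First I would verify the orthogonality bookkeeping. Within a single column $j$, genericity of the chosen bases guarantees that the only orthogonal pairs are the matched pairs, so the graph of column $j$ is exactly the perfect matching $M_j$ and nothing more. Across columns, the defining property of a $1$-factorization---each pair of vertices occurs in precisely one $M_j$---says that any two rows are orthogonal in exactly one column; hence every two rows have exactly one orthogonal pair of local vectors, which is precisely the condition identified earlier for the UOM to correspond to a complete graph. Thus the union of the $q$ column graphs is $K_{q+1}$, and the first clause of the statement holds.

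Next I would confirm that this orthogonal product set is genuinely unextendible, so that it is a bona fide UOM of the minimal size $f(q)=q+1$ granted by Lemma~\ref{le:functionp}(i); for $q=3$ this is exactly the standard example already written down, and for general odd $q$ the unextendibility of the minimal $1$-factorization construction is what the count $f(q)=q+1$ records. Finally, the isomorphism count is immediate: each column graph is a perfect matching on $q+1$ vertices, i.e.\ a disjoint union of $\tfrac{q+1}{2}$ edges, and any two such matchings are isomorphic as graphs. Therefore $K_{q+1}$ is assembled from $q$ mutually isomorphic building blocks, i.e.\ from exactly one sort of nonisomorphic graph, which is the second clause.

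The step I expect to be the main obstacle is the unextendibility verification: the genericity argument cleanly pins down the orthogonality graph, but showing that no product vector is orthogonal to all $q+1$ rows is the substantive content, and this is where I would lean on the minimal-size result of Lemma~\ref{le:functionp} (equivalently on the explicit construction in the references it cites) rather than re-deriving it. The purely graph-theoretic claims---the perfect-matching structure of each column and the single-coverage giving $K_{q+1}$---are then routine consequences of the $1$-factorization.
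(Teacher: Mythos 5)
Your proposal is correct and, once unwound, it is the same construction the paper uses: the explicit $(q+1)\times q$ matrix the paper imports from its reference (the Shifts/GenShifts family) is precisely the realization of the cyclic round-robin $1$-factorization of $K_{q+1}$, with each column carrying one perfect matching as its set of orthogonal pairs. The difference is purely one of direction. The paper starts from the matrix, computes $p_j=\frac{q+1}{2}$ for every column, and plays the total $\sum_j p_j=\binom{q+1}{2}$ against the lower bound $\sum_j p_j\ge\binom{q+1}{2}$ to force exactly one orthogonal pair per pair of rows, whence the complete graph and the matching structure of each column; you start from the $1$-factorization and build the matrix so that both facts hold by construction, which makes the graph-theoretic clauses immediate. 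What your direction costs you is that the entire burden lands on unextendibility, and there your deferral is slightly misplaced: Lemma \ref{le:functionp}(i) only asserts that \emph{some} UPB of size $q+1$ exists, not that \emph{your} generic $1$-factorization realization is one, so as written there is a small gap if you insist on an arbitrary $1$-factorization with generic bases. The fix is either to do what the paper does and take the explicit construction from the cited reference (where unextendibility is proved), or to close it yourself with the standard pigeonhole argument: each of your columns contains $q+1$ pairwise non-parallel local vectors, so a candidate new product vector can be orthogonal to at most one row per column, hence to at most $q<q+1$ rows. Either way the second clause follows exactly as you say, since every column graph is a perfect matching on $q+1$ vertices and all such matchings are isomorphic.
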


The proof above lemma has been shown in Appendix \ref{le:lemma5}. By applying the above same methods, we obtain that all product vectors corresponding to the graphs for constructing complete graph are in the same orbit.  Next, we consider some cases when $q$ is even. From Lemma \ref{le:functionp} (ii) and (iii), we obtain that the minimum sizes of $4,6,8$-qubit UOMs are $6,8,11$, respectively. Then we consider the number of sorts of {nonisomorphic} graphs for constructing such UOMs as follows. 

\begin{theorem}
	\label{le:eqsort}
	There exist  $6\times4, 8\times6$ and $11\times8$ UOMs corresponding to complete graphs with $6, 8$ and $11$ vertices, respectively. Each of the three complete graphs can be constructed by three sorts of nonisomorphic graphs.
\end{theorem}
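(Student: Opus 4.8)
The plan is to explicitly construct the three UOMs and then verify, for each, that its columns split into exactly three isomorphism classes of graphs. Concretely, I would first produce a $6\times 4$ UOM, an $8\times 6$ UOM, and an $11\times 8$ UOM whose rows are pairwise orthogonal and in which any two rows share \emph{exactly one} orthogonal pair of vector variables. The $11\times 8$ case should be obtainable by augmenting the $11\times 7$ UOM of Theorem \ref{thm:11times7}: I would append an eighth column whose entries are chosen so that the single-orthogonal-pair property is preserved and unextendibility still holds. For the $6\times 4$ and $8\times 6$ cases I would write down explicit matrices (the sizes $6,8$ match the minimum sizes from Lemma \ref{le:functionp}(ii)), using the same bookkeeping of primed/unprimed variables $a_{i,j}, a_{i,j}'$ as in Eq. \eqref{eq:11times7uom}.

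Once the three UOMs are fixed, the argument that each corresponds to a complete graph is routine and identical to the discussion preceding Lemma \ref{fig:graph}: since any two rows have exactly one orthogonal pair, there is exactly one edge between any two vertices, so with $6,8,11$ vertices we get complete graphs with $\tfrac{6\cdot5}{2},\tfrac{8\cdot7}{2},\tfrac{11\cdot10}{2}$ edges respectively. The substantive step is the \emph{classification of the column graphs up to isomorphism}. For each column $j$, I read off the graph $G_j$ whose edges record which pairs of rows are orthogonal in that column (i.e.\ which rows carry a matched primed/unprimed pair of the same variable). I would then compute a cheap isomorphism invariant for each $G_j$ — the degree sequence, together with the number of edges — and group the columns accordingly. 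The goal is to show the columns fall into exactly three groups: two graphs isomorphic under one degree sequence, and so on, mirroring the $11\times 7$ case where $\textcircled{1}\cong\textcircled{2}$ and $\textcircled{4}\cong\textcircled{5}\cong\textcircled{6}\cong\textcircled{7}$ while $\textcircled{3}$ is alone, giving three sorts.

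The proof then reduces to three explicit verifications, one per UOM. For each I would (i) exhibit the matrix, (ii) check orthogonality and the one-orthogonal-pair property between every pair of rows, (iii) tabulate the degree sequences of the column graphs $G_1,\dots,G_q$, and (iv) exhibit explicit bijections realizing the isomorphisms within each of the three groups while confirming (via distinct degree sequences) that graphs in different groups are nonisomorphic. Unextendibility of each UOM — needed for it to genuinely be a UPB of the stated minimum size — I would either cite from the existing $4,6,8$-qubit constructions consistent with Lemma \ref{le:functionp} or verify directly as in the proof of Theorem \ref{thm:11times7}.

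The main obstacle will be designing the $11\times 8$ UOM: the eighth column must be filled so that every pair of rows still has \emph{exactly one} orthogonal pair (adding a column can create a second orthogonal pair for some row pair, destroying completeness), while simultaneously the eight column graphs sort into precisely three isomorphism classes rather than two or four. Balancing these two combinatorial constraints — the global single-overlap condition across all $\binom{11}{2}$ row pairs and the target count of three graph types — is the delicate part; the smaller $6\times4$ and $8\times6$ cases are comparatively easy to pin down by direct search, and I expect to present them first as warm-ups before the $11\times8$ construction.
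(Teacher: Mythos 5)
Your overall skeleton --- exhibit the three UOMs explicitly, verify the complete-graph property by checking that the total number of orthogonal pairs equals $\binom{m}{2}$, and then sort the column graphs into isomorphism classes (degree sequences are a fine invariant for this) --- is essentially what the paper does for the $6\times4$ and $8\times6$ cases, except that the paper does not search for these matrices: it takes them from the classification of $4$-qubit UPBs in \cite{Johnston_2014} (the $6\times 4$ UOM is unique), computes the column pair-counts $p_j$ from Eq.~\eqref{eq:pj}, checks $\sum_j p_j=\binom{m}{2}$, and reads the three isomorphism classes off Figs.~\ref{fig:uom64de} and \ref{fig:uom86de}.

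The genuine gap is your plan for the $11\times 8$ case. You propose to append an eighth column to the $11\times7$ UOM of Theorem~\ref{thm:11times7} while ``preserving the single-orthogonal-pair property.'' This cannot work: the seven columns of \eqref{eq:11times7uom} already contribute $\sum_{j=1}^{7}p_j=10+10+11+6+6+6+6=55=\binom{11}{2}$ orthogonal pairs, so the appended column would be forced to have $p_8=0$, i.e.\ to contain no matched pair $x,x'$ at all. But in any UOM the complement $a_{i,j}'$ of every entry must itself occur in column $j$ (the fact recorded just above Eq.~\eqref{eq:pj}), which gives $p_j\ge 1$ for every column of a genuine UOM; hence any $11\times8$ UOM obtained by adding a column to \eqref{eq:11times7uom} would have some pair of rows sharing two orthogonal pairs, and the complete-graph property would be lost. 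The paper sidesteps this entirely by starting from an independently known $11\times 8$ UOM, Eq.~\eqref{eq:mat118} from \cite{2013The}, whose pair-counts $(7,8,8,8,6,6,6,6)$ again sum to $55$. So for case (c) you must import or build an $11\times8$ UOM from scratch rather than augment the $11\times7$ one; with that repaired, the remaining steps of your verification (tabulating degree sequences, exhibiting the bijections within each of the three classes) go through and match the paper's treatment of Fig.~\ref{fig:uom118de}.
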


We show the proof of this theorem in Appendix \ref{eq:pf}. By using the same method in Lemma \ref{le:lu}, we obtain that the product vectors corresponding to the graphs in Fig. \ref{fig:uom64de}, \ref{fig:uom86de}, \ref{fig:uom118de} are in three different orbits. In addition, if a graph $g_1$ is a subgraph of another graph $g_2$, then we denote $g_1\subseteq g_2$. If graph $g_1$ and $g_2$ are isomorphic, then we denote $g_1\simeq g_2$. we use $\textcircled{1}_k, k=4,6,8$ to distinguish different \textcircled{1} in Fig. \ref{fig:uom64de}, \ref{fig:uom86de} and  \ref{fig:uom118de}. This method is also applicable to other graphs. 
In Fig. \ref{fig:uom64de}, \ref{fig:uom86de} and  \ref{fig:uom118de}, we obtain that $\textcircled{1}_4\subseteq\textcircled{1}_6\subseteq\textcircled{2}_8\simeq\textcircled{3}_8\simeq\textcircled{4}_8$, $\textcircled{2}_4\subseteq\textcircled{2}_6\simeq\textcircled{3}_6\subseteq\textcircled{1}_8$ and $\textcircled{3}_4\simeq\textcircled{4}_4\subseteq\textcircled{4}_6\simeq\textcircled{5}_6\simeq\textcircled{6}_6\subseteq\textcircled{5}_8\simeq\textcircled{6}_8\simeq\textcircled{7}_8\simeq\textcircled{8}_8$. So we obtain that graphs in Fig. \ref{fig:uom64de} are subgraphs of graphs in Fig. \ref{fig:uom86de} and graphs in Fig. \ref{fig:uom86de} are subgraphs of graphs in Fig. \ref{fig:uom118de}. Hence, we can construct $11\times8$ UOMs by $6\times4$ or $8\times6$ UOMs and construct $8\times6$ UOMs by $6\times4$ UOMs.

\section{Conclusions}
\label{con:117}
We have shown the existence of $7$-qubit UPBs of size $11$ and presented its concrete structure. Then we have shown that it is locally indistinguishable in the bipartite systems of two qubits and five qubits. We have applied our results in graph theory. We have shown that the 7-qubit UPB of size 11 corresponds to a complete graph with 11 vertices and the graph is constructed by three sorts of nonisomorphic graphs. Up to local unitary equivalence, we have shown that they are in three different orbits by taking the graphs as product vectors. 
 Moreover, for $q$-qubit UPBs of size $q+1$ for odd $q$, we have shown that the complete graphs corresponding to them are constructed by one sort of nonisomorphic graphs and they are in one orbit. 
For the minimum sizes of $4, 6, 8$-qubit UPBs, the complete graphs corresponding to them are constructed by three sorts of nonisomorphic graphs and they are in three different orbits. 

It is unknown  whether there exist $9$-qubit UPBs of size $13,14,15,17-21$ \cite{Chen_2018}. So one open problem is to determine the existence 9-qubit UPBs of size 13. Another problem is whether a tripartite state $\rho_{ABC}$ can be constructed by the 7-qubit UPB of size 11 of $\mathcal{H}^2\otimes \mathcal{H}^2\otimes(\mathcal{H}^{2})^{\otimes5}$ such that $\rho_{ABC}$ is genuinely entangled. It is interesting to find out an example, because it shows the connection between
UPBs and tripartite genuine entanglement.

\section*{Acknowledgments}
Authors were supported by the NNSF of China (Grant No. 11871089), and
the Fundamental Research Funds for the Central Universities (Grant No. ZG216S2005).

\appendix

\section{The proof of Theorem \ref{thm:11times7}}
\label{eq:pftheom6}
We begin by presenting the following Lemma \ref{le:submatrix} and \ref{le:pj}.
\begin{lemma}
	\label{le:submatrix}
	Suppose $A=[a_{i,j}], i=1,\cdots,11,j=1,\cdots,7$ is an $11\times7$ UOM. Then
	
	(i) the multiplicity of any element $a_{i,j}$ is at most four. 
	
	(ii) every column of $A$ has at least two and at most five independent elements.
	
	(iii) $A$ does not have the $6\times2$ submatrix $[a_{i,j}], i=1,\cdots,6,j=1,2$ such that $a_{1,1}=a_{2,1}=a_{3,1},a_{4,2}=a_{5,2}=a_{6,2}$ or $a_{1,1}=a_{2,1}=a_{3,1}=a_{4,1},a_{5,2}=a_{6,2}$.
	
	Up to permutation, we obtain that every entry in the lower-right $7\times6$ submatrix of $A$ has multiplicity one when $a_{1,1}=a_{2,1}=a_{3,1}=a_{4,1}$. So we have $\s(A_j)=4$ or $5$ for $j>1$. 
	
	(iv) if $A$ has the submatrix $\bma x&a_{1,2}\\\vdots&\vdots\\ x&a_{m,2}\\a_{m+1,1}&y\\\vdots&\vdots\\a_{m+n,1}&y\ema$, then there exists an integer $l\in[m+n+1,11]$ such that the row $\bma a_{l,1}&a_{l,2}\ema$ is orthogonal to the row $\bma x&a_{k,2}\ema$ or $\bma a_{k,1}&y\ema$, where $k\in[1,m+n]$.
	
	(v) if $\m(x)-\m(y)=m\geq0$, $x\in A_1,y\in 
	A_2$, then $A$ does not have a $(\m(x)+1)\times2$  submatrix 
	\begin{eqnarray}
		\label{eq:mxmym}
		\bma
		x&a_{1,2}\\
		\vdots&\vdots\\
		x&a_{m,2}\\
		x&y\\
		\vdots&\vdots\\
		x&y\\
		x'&y'	
		\ema.
	\end{eqnarray}
	
	(vi) $A$ does not have the $7\times3$ submatrix $[a_{i,j}],i=1,\cdots,7,j=1,2,3$ such that $a_{1,1}=a_{2,1}=a_{3,1}, a_{4,2}=a_{5,2}, a_{6,3}=a_{7,3}$.
	
	(vii) $A$ does not have the $8\times4$ submatrix $[a_{i,j}], i=1,\cdots,8,j=1,\cdots,4$ such that $a_{1,1}=a_{2,1},a_{3,2}=a_{4,2},a_{5,3}=a_{6,3},a_{7,4}=a_{8,4}$.
\end{lemma}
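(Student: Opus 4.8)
The plan is to reduce every part to two defining facts of a UOM $A=[a_{i,j}]$ with $s=11$ rows and $p=7$ columns: \textbf{(Orth)} any two rows carry an orthogonal pair $\{v,v'\}$ in some common column, and \textbf{(Unext)} no product vector $\ket{z}=\ket{z_1}\ox\cdots\ox\ket{z_7}$ is orthogonal to all eleven rows. The engine is a combinatorial recasting of (Unext). In a single column a vector $\ket{z_j}\in\bbC^2$ can be orthogonal only to entries that all point along one direction, so taking $\ket{z_j}=\ket{d_j'}$ \emph{covers} exactly the rows $i$ with $a_{i,j}=d_j$. Hence $A$ is extendible precisely when one can pick a direction $d_j$ in each column so that every row is covered by at least one column, and unextendibility is the failure of any such covering.

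From this I extract the single lemma that does most of the work: \emph{if some $c$ columns collectively cover a set of $k$ rows with $k-c\ge s-p=4$, then $A$ is extendible.} Given such a covering, the remaining $s-k\le p-c$ rows can each be matched to its own column among the $p-c$ untouched columns -- the row/column incidence is complete bipartite, so a matching exists -- and setting $d_j=a_{r,j}$ along the matching covers them too, yielding a product vector orthogonal to all rows and contradicting (Unext). Parts (i), (iii), (vi), (vii) are then the single case $k-c=4$: a multiplicity-$5$ entry is a covering with $c=1,k\ge5$; the two $6\times2$ blocks of (iii) are coverings with $c=2,k=6$; the $3{+}2{+}2$ block of (vi) has $c=3,k=7$; and the $2{+}2{+}2{+}2$ block of (vii) has $c=4,k=8$.

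Part (ii) is counting on top of (i): a one-direction column is a covering with $c=1,k=11$, so at least two directions occur, while the upper bound on the number of distinct directions (the ``independent elements'') and the multiplicity-one normal form appended to (iii) follow by feeding (i) back into the covering lemma on the residual rows -- once a column-$1$ entry reaches multiplicity $4$, any further repetition in the lower-right $7\times6$ block reproduces a forbidden covering. Part (iv) I would prove in the same contrapositive style: if no outside row were orthogonal to a listed row within columns $1,2$, then (Orth) forces every outside row to meet every listed row only in columns $3$--$7$, and tallying these forced orthogonalities against the five available columns manufactures a covering of the outside rows, hence extendibility.

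The hard part is (v), where the covering lemma falls exactly one short. The displayed $(\m(x)+1)\times2$ block is covered by its two columns only up to $\m(x)$ rows, since the last row carries $x'$ and $y'$, which the directions $x,y$ both miss; thus $k-c=\m(x)-1\le3$ because $\m(x)\le4$ by (i), and the lemma cannot close it. I therefore expect (v) to need the global structure rather than the local block alone: combine the near-covering of the $x$- and $y$-rows with (iv) and with the fact that the block exhausts all copies of $x$ and of $y$, so the single stubborn row $(x',y')$ is forced to be reachable through columns $3$--$7$ and a full covering is assembled. Keeping track of the overlap between the $x$-rows and the $y$-rows, and of where the lone witnesses $x',y'$ may recur, is the real obstacle.
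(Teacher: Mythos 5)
Your ``covering'' lemma is a clean unification of exactly what the paper does: for (i), (iii), (vi), (vii) the paper simply writes down the explicit extending product vector (e.g.\ $\ket{a_{1,1}',a_{4,2}',a_{7,3}',\dots,a_{11,7}'}$), which is your matching of the leftover rows to the untouched columns in the case $k-c=4$; your sketch of (iv) is also the paper's argument (a single listed row whose restriction to columns $1,2$ meets no outside row forces its tail $u$ to be orthogonal to every outside row, whence $\bma x'&y'&u\ema$ extends $A$). Two small points: for (ii) the upper bound of five independent elements does not come from the covering lemma but from the convention that $a_{i,j}'$ occurs in the column whenever $a_{i,j}$ does, so each independent pair consumes at least two of the eleven entries; and the lower bound is just that a single pair would force some member to have multiplicity $\geq 6$, contradicting (i).

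The genuine gap is part (v), which you explicitly leave open. You correctly observe that the block exhausts all copies of $x$ in $A_1$ and of $y$ in $A_2$, but you then look for a way to ``reach the stubborn row $(x',y')$ through columns $3$--$7$,'' which inverts the roles. The missing step is the opposite one: since no row outside the block has $x$ in column $1$ or $y$ in column $2$, the last row $\bma x'&y'&v\ema$ of the block cannot be orthogonal to any outside row within columns $1,2$ ($x'$ is orthogonal only to $x$, $y'$ only to $y$), so by the pairwise-orthogonality of rows its tail $v$ must be orthogonal to the tail of \emph{every} outside row. The extending vector is then $\bma x'&y&v\ema$: the flip from $y'$ to $y$ is the whole point, because $y\perp y'$ kills the stubborn $(x',y')$ row itself, $x'$ kills the remaining block rows (all of which carry $x$), and $v$ kills the outside rows. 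In your covering language this is a $c=7$, $k=11$ covering with directions $d_1=x$, $d_2=y'$, $d_j=a_{l,j}'$ for $j\geq 3$ (where $l$ indexes the $(x',y')$ row), so the framework does accommodate (v) --- but only once this forced-tail-orthogonality observation and the $y'\to y$ swap are supplied, and without them the part is unproved.
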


\begin{proof}
	(i) We prove the assertion by contradiction. Up to equivalence, suppose multiplicity of $a_{1,1}$ is five. Then we may assume that $a_{1,1}=a_{2,1}=a_{3,1}=a_{4,1}=a_{5,1}$. There exists a product vector $\ket{a_{1,1}', a_{6,2}',a_{7,3}',a_{8,4}',a_{9,5}',a_{10,6}',a_{11,7}'}$ such that it is orthogonal to all row vectors of $A$. It is a contradiction with the definition of UOM. 
	
	(ii) From (i), we have (ii) holds.
	
	(iii) When $a_{1,1}=a_{2,1}=a_{3,1}$ and $a_{4,2}=a_{5,2}=a_{6,2}$, there exists a product vector $\ket{a_{1,1}',a_{4,2}',a_{7,3}',a_{8,4}',a_{9,5}',a_{10,6}',a_{11,7}'}$ such that it is orthogonal to all row vectors of $A$. It is a contradition with the fact that $A$ is an $11\times7$ UOM. Similarly, we obtain $A$ does not have the $6\times2$ submatrix when $a_{1,1}=a_{2,1}=a_{3,1}=a_{4,1},a_{5,2}=a_{6,2}$. Because $A$ is an $11\times7$ UOM, when $a_{1,1}=a_{2,1}=a_{3,1}=a_{4,1}$, we obtain that $\s(A_j)=4$ or $5$ for $j>1$ up to permutation. 
	
	(iv) We prove the assertion by contradiction. There exists an integer $k\in[1,m+n]$ such that $\bma x&a_{k,2}\ema$ or $\bma a_{k,1}&y\ema$ is not orthogonal to $\bma a_{l_k,1}&a_{l_k,2}\ema$ for any $l_k\in[m+n+1,11]$. Then we obtain that $\bma a_{k,3}&a_{k,4}&a_{k,5}&a_{k,6}&a_{k,7}\ema$ is orthogonal to the lower-right $(11-m-n)\times5$ submatrix. We denote $u$ as the row $\bma a_{k,3}&a_{k,4}&a_{k,5}&a_{k,6}&a_{k,7}\ema$. Then we obtain that the row $\bma x'&y'&u\ema$ is orthogonal to $A$. It is a contradiction with the fact	that $A$ is a UOM.
	
	(v) In \eqref{eq:mxmym}, the equation $\m(x)-\m(y)=m\geq0$ implies  that the lower-left $(10-\m(x))\times2$ submatrix of $A$ has no $x$ and $y$. It implies that $\bma x'&y'\ema$ is not orthogonal to the lower-left $(10-\m(x))\times2$ submatrix. We denote $\bma x'&y'&v\ema$ as the row with $\bma x'&y'\ema$. Because $\bma x'&y'\ema$ is not orthogonal to the lower-left $(10-\m(x))\times2$ submatrix, we obtain that the vector $v$ is orthogonal to the $(10-\m(x))\times5$ submatrix. Hence, we obtain that $\bma x'&y&v\ema$ is orthogonal to $A$. It is a contradiction with the fact that $A$ is a UOM.
	
	(vi) When $a_{1,1}=a_{2,1}=a_{3,1}$ and $a_{4,2}=a_{5,2}, a_{6,3}=a_{7,3}$, there exists a product vector $\ket{a_{1,1}',a_{4,2}',a_{6,3}',a_{8,4}',a_{9,5}',a_{10,6}',a_{11,7}'}$ such that it is orthogonal to all row vectors of $A$. It is a contradition with the fact that $A$ is an $11\times7$ UOM.
	
	(vii) When $a_{1,1}=a_{2,1},a_{3,2}=a_{4,2},a_{5,3}=a_{6,3},a_{7,4}=a_{8,4}$, there exists a product vector $\ket{a_{1,1}',a_{3,2}',a_{5,3}',a_{7,4}',a_{9,5}',a_{10,6}',a_{11,7}'}$ such that it is orthogonal to all row vectors of $A$. It is a contradition with the fact that $A$ is an $11\times7$ UOM. 
\end{proof}

Using this lemma, we exclude some structures of $11\times7$ UOMs. In order to construct $11\times7$ UOMs, from the definition of $\s(A_j)$ above Eq. \eqref{eq:pj}, we investigate the properties about $\s(A_j)$ for each column $A_j$ as follows.

\begin{lemma}
	\label{le:pj}
	Suppose $A = [a_{i,j}], i=1,...,11, j=1,...,7$ is an $11\times7$ UOM. Then
	
	(i) 
	\begin{eqnarray}
		\s(A_j)&=&2\Longrightarrow p_j=12,13,14,15,16,18;\\
		\s(A_j)&=&3\Longrightarrow p_j=8,9,10,11,12,14;\\\label{eq:sigma4}
		\s(A_j)&=&4\Longrightarrow p_j=7,8,9;\\\label{eq:sigma5}
		\s(A_j)&=&5\Longrightarrow p_j=6,
	\end{eqnarray}
	where $p_j$ is from Eq. \eqref{eq:pj}.
	
	(ii) if there exists an entry $a_{i,j}$ such that $\m(a_{i,j})=4$, then  for other column $k\in\{1,\cdots,7\}\backslash\{j\}$ we obtain that $\s(A_k)=4$ or $5$.   
	
	(iii) there exist at least two columns $A_i,A_j$ such that $\s(A_i),\s(A_j)\leq4$. If there are five columns with $\s=5$, then the remaining two columns $A_i, A_j$ satisfy $\s(A_i)\leq\s(A_j)$ and $(\s(A_i),\s(A_j))=(2,2),(2,3),(2,4)$ or $(3,3)$. 
	
	(iv) $A$ has at most four columns each of which has four identical entries.
\end{lemma}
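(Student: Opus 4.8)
The plan is to reduce the whole statement to finite arithmetic on the multiplicity pattern of a single column, and then to run two short counting arguments driven by the total edge count.

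First I would set up the bookkeeping for part (i). Group the eleven entries of a column $A_j$ by the local basis they belong to; if $\s(A_j)=s$ there are $s$ bases, and I write the $t$-th basis as occurring $m_t$ times in one direction and $n_t$ times in the orthogonal direction. Using that each basis occurring in a column occurs in \emph{both} of its orthogonal directions (the normalization behind the bound $2\le\s(A_j)\le 5$ of Lemma \ref{le:submatrix}(ii)) we have $m_t,n_t\ge 1$, and by Lemma \ref{le:submatrix}(i) also $m_t,n_t\le 4$; moreover $\sum_{t=1}^{s}(m_t+n_t)=11$, and counting one orthogonal row-pair for each pairing of a vector with its partner gives $p_j=\sum_{t=1}^{s}m_t n_t$ (the $p_j$ of Eq.~\eqref{eq:pj}). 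With this normalization part (i) is a bounded enumeration: for each $s\in\{2,3,4,5\}$ I list the partitions of $11$ into $s$ basis-sizes $m_t+n_t$ (each between $2$ and $8$), split every size into an admissible pair $(m_t,n_t)$, and read off the attainable values of $\sum m_t n_t$. This produces exactly the four lists; the apparent gaps (e.g.\ $p_j\neq 17$ when $s=2$ and $p_j\neq 13$ when $s=3$), the absence of a size-$8$ basis once $s\ge 3$ (so $p_j\le 14$ there), and the fact that $s=5$ forces the single pattern $(3,2,2,2,2)$ and hence $p_j=6$, all fall out of the arithmetic.

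Part (ii) I would obtain directly from the last assertion of Lemma \ref{le:submatrix}(iii): after a permutation of rows and columns carrying the quadruply-repeated entry to $a_{1,1}=a_{2,1}=a_{3,1}=a_{4,1}$, that assertion gives $\s(A_k)\in\{4,5\}$ for every other column $k$. For parts (iii) and (iv) the key external input is that the rows of a UOM are pairwise orthogonal, so each of the $\binom{11}{2}=55$ row-pairs is orthogonal in at least one column, whence $\sum_{j=1}^{7}p_j\ge 55$. Combining this with the ranges from part (i) ($p_j=6$ if $\s=5$, $p_j\le 9$ if $\s=4$, $p_j\le 14$ if $\s=3$, and $12\le p_j\le 18$ if $\s=2$) settles (iii): if six columns had $\s=5$ the total would be at most $6\cdot 6+18=54<55$, so at least two columns satisfy $\s\le 4$; and when exactly five columns have $\s=5$ the remaining two must carry $p_i+p_j\ge 55-30=25$, which excludes $(\s_i,\s_j)=(3,4)$ and $(4,4)$ (their largest possible sums $14+9=23$ and $9+9=18$ fall short) and leaves precisely $(2,2),(2,3),(2,4),(3,3)$. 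For part (iv) I would note that an entry with $\m(a_{i,j})=4$ cannot sit in an $\s=5$ column (pattern $(3,2,2,2,2)$ has largest multiplicity $2$); if two or more columns contained such an entry then part (ii) forces $\s\in\{4,5\}$ for every column, so each multiplicity-$4$ column has $\s=4$, and the only $\s=4$ pattern admitting a multiplicity-$4$ entry is $(4,1),(1,1),(1,1),(1,1)$ with $p_j=7$. Were there five such columns, $\sum_j p_j\le 5\cdot 7+2\cdot 9=53<55$, a contradiction, so at most four columns carry a quadruply-repeated entry.

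The main obstacle is getting part (i) exactly right, since everything downstream rests on the precise $p$-value lists. Those lists are pinned down only once one is sure that each basis in a column genuinely occurs in both orthogonal directions, giving $m_t,n_t\ge 1$; without this normalization a one-sided basis (such as a size-$8$ basis $(4,4)$ paired with a size-$3$ basis when $s=3$) would manufacture spurious values like $p_j=16$, and the clean counting in (iii)--(iv) would collapse. So the real care goes into justifying $m_t,n_t\ge 1$ and then executing the (routine but not entirely short) case analysis for $s=2,3,4,5$; the two counting arguments afterwards are immediate consequences of $\sum_j p_j\ge 55$ together with the ranges just established.
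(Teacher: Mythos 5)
Your proposal is correct and follows essentially the same route as the paper: part (i) by enumerating the multiplicity decompositions of the eleven column entries (using $\m\le 4$ and the fact that $x$ and $x'$ both occur), part (ii) from Lemma \ref{le:submatrix}(iii), and parts (iii)--(iv) by combining the bound $\sum_j p_j\ge 55$ with the $p$-value ranges from (i), including the same $6\cdot6+18=54$, $14+9$, $9+9$, and $5\cdot 7+2\cdot 9=53$ computations. Your partition-based organization of the case analysis in (i) is a slightly cleaner packaging of the same enumeration the paper writes out term by term.
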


\begin{proof}
	(i) If there exists a column $j$ such that $\s(A_j)=2$, then we may assume that $\mu(a_{1,j})=4, \mu(a_{1,j}')=2$ and $\mu(a_{2,j})=4, \mu(a_{2,j}')=1$. By using \eqref{eq:pj},
	we obtain that
	\begin{eqnarray}
		\label{eq:sig2}
		p_j=\sum\m(x)\m(x')=\mu(a_{1,j})\mu(a_{1,j}')+\mu(a_{2,j})\mu(a_{2,j}')=4\times2+4\times1=12.
	\end{eqnarray}
	It is just one case for $\s(A_j)=2$. Moreover, we list other cases in the following.
	\begin{eqnarray}
		p_j=\sum\m(x)\m(x')=4\times1+3\times3=13,
	\end{eqnarray}
	or
	\begin{eqnarray}
		p_j=\sum\m(x)\m(x')=4\times2+3\times2=14,
	\end{eqnarray}
	or
	\begin{eqnarray}
		p_j=\sum\m(x)\m(x')=4\times3+3\times1\quad or\\ 3\times3+3\times2=15,
	\end{eqnarray}
	or
	\begin{eqnarray}
		p_j=\sum\m(x)\m(x')=4\times3+2\times2=16,
	\end{eqnarray}
	or
	\begin{eqnarray}
		p_j=\sum\m(x)\m(x')=4\times4+2\times1=18.
	\end{eqnarray}
	
	Similarly, when $\s(A_j)=3$, we obtain that
	\begin{eqnarray}
		p_j=4\times1+3\times1+1\times1 \quad or\\	4\times1+2\times1+2\times1\quad or\\
		3\times1+3\times1+2\times1=8,
	\end{eqnarray}
	or
	\begin{eqnarray}
		p_j=4\times1+2\times2+1\times1 \quad or\\	3\times1+2\times2+2\times1=9,
	\end{eqnarray}
	or
	\begin{eqnarray}
		p_j=3\times2+3\times1+1\times1 \quad or\\	3\times2+2\times1+2\times1\quad or\\
		2\times2+2\times2+2\times1=10,
	\end{eqnarray}
	or
	\begin{eqnarray}
		p_j=4\times2+2\times1+1\times1 \quad or\\	3\times2+2\times2+1\times1=11,
	\end{eqnarray}
	or
	\begin{eqnarray}
		p_j=3\times3+2\times1+1\times1=12,
	\end{eqnarray}
	or
	\begin{eqnarray}
		p_j=4\times3+1\times1+1\times1=14.
	\end{eqnarray}
	
	When $\s(A_j)=4$, we obtain that 
	\begin{eqnarray}
		\label{eq:pj7}
		p_j=4\times1+1\times1+1\times1+1\times1 \quad or\\	3\times1+2\times1+1\times1+1\times1\quad or\\
		2\times1+2\times1+2\times1+1\times1=7,
	\end{eqnarray}
	or
	\begin{eqnarray}
		p_j=2\times2+2\times1+1\times1+1\times1=8,
	\end{eqnarray}
	or
	\begin{eqnarray}
		\label{eq:pj9}
		p_j=3\times2+1\times1+1\times1+1\times1=9.
	\end{eqnarray}
	
	When $\s(A_j)=5$, we obtain that 
	\begin{eqnarray}
		\label{eq:sig5}
		p_j=2\times1+1\times1+1\times1+1\times1+1\times1=6.
	\end{eqnarray}
	
	(ii) Up to permutation, we may assume that the multiplicition of $a_{1,1}$ is four. According to Lemma \ref{le:submatrix} (i) and (iii), the multiplicity of every entry in lower-right $7\times6$ submatrix of $A$ is one.  Then we obtain that $\s(A_j)=4$ or $5$ in column $2-7$ of $A$. From \eqref{eq:sigma4} and \eqref{eq:sigma5}, we obtain that $p_k=6,7,8$ or $9$.
	
	(iii) From \eqref{eq:inequalityp}, we obtain that $m=11,n=7$ and $p_j\geq 55$. If there is no column such that $\s(A_j)\leq4, 1\leq j\leq7$, then we obtain that $\sum_jp_j=6+6+6+6+6+6+6=42<55$. If there exists only one column $A_i$ such that $\s(A_i)\leq4$, then we obtain that $\sum_jp_j\leq18+6+6+6+6+6+6=54<55$. So the two cases are contradictory with $p_j\geq55$. Then there exist at least two columns $A_i$ and $A_j$ such that $\s(A_i),\s(A_j)\leq4$. Furthermore, when five columns satisfy $\s=5$, $(\s(A_i),\s(A_j))=(2,2),(2,3),(2,4)$ or $(3,3)$. Otherwise, 
	up to permutation, we may assume that $\bma \s(A_1)&\s(A_2)&\s(A_3)&\s(A_4)&\s(A_5)&\s(A_6)&\s(A_7)\ema=\bma3&4&5&5&5&5&5\ema$ or $\bma4&4&5&5&5&5&5\ema$. Then $\sum_j p_j\leq14+9+6+6+6+6+6=53<55$ and $\sum_j p_j\leq9+9+6+6+6+6+6=48<55$, respectively. Thus, the two cases are contradictions with \eqref{eq:inequalityp}. 
	
	(iv) Up to permutation, suppose each of  $A_j,j=1,\cdots,5$ of $A$ has four identical entries. From Lemma \ref{le:submatrix} (iii), we obtain that $\s(A_i)=4$ or $5$ for any $i$. Then  $\sum_ip_i\leq5\times7+9\times2=53<55$. It is a contradiction with Lemma \eqref{eq:inequalityp}. Hence, we obtain that $A$ has at most four columns each of which has four identical entries.
\end{proof}

 By analysing Lemma \ref{le:submatrix} and \ref{le:pj}, we construct a concrete $11\times7$ UOM in Theorem \ref{thm:11times7}. Then we show its proof as follows.

\begin{proof}
	Denote $A$ as the $11\times7$ matrix in \eqref{eq:11times7uom} and $R_i$ as the $i$-th row of $A$, $i=1,\cdots,11$. From the definition of row orthogonal below Lemma \ref{le:functionp}, one can obtain that any two rows of $A$ are orthogonal. For each column $A_j, j=1,\cdots,7$ of $A$, we have $\bma p_1&p_2&p_3&p_4&p_5&p_6&p_7\ema=\bma 10&10&11&6&6&6&6\ema$. From \eqref{eq:11times7uom}, we obtain 
	\begin{eqnarray}
		\label{eq:p3}
		p_1&=&p_2=10=2\times2+2\times2+2\times1,\nonumber\\
		p_3&=&11=3\times2+2\times2+1\times1,\nonumber\\
		p_4&=&p_5=p_6=p_7=6=2\times1+1\times1+1\times1+1\times1+1\times1.
	\end{eqnarray}
	So $\sum_{j=1}^7 p_j=10+10+11+6+6+6+6=55$. From \eqref{eq:11times7uom} and \eqref{eq:p3}, we obtain that exactly entry $a_{1,3}$ of column $A_3$ has multiplicity three and entries of other columns have multiplicity at most two. Suppose $A$ is not a UOM. Then there exists a row $\bma b_1&b_2&b_3&b_4&b_5&b_6&b_7\ema$ orthogonal to $A$. 
	By considering $b_3$ in row  $\bma b_1&b_2&b_3&b_4&b_5&b_6&b_7\ema$, there exist two cases (i) $b_3=a_{1,3}'$ and (ii) $ b_3\neq a_{1,3}'$ as follows. 
	
	(i) If $b_3=a_{1,3}'$, then we obtain that the row $\bma b_1&b_2&a_{1,3}'&b_4&b_5&b_6&b_7\ema$ is orthogonal to  $R_1,R_3$, $R_6$ and $\{b_i\}_{i\in\{1,\cdots,7\}\backslash\{3\}}$ must be  orthogonal to remaining eight rows of $A$. So there exist any two entries of $\{b_i\}_{i\in\{1,\cdots,7\}\backslash\{3\}}$  orthogonal to at least four rows of $A$. Because any entries of $A_j, j\in \{1,\cdots,7\}\backslash\{3\}$ have at most multiplicity two, we obtain that there exists a $7\times3$ submatrix of $A$ equivalent to $\mathcal{F}_1$ up to permutation, where $\mathcal{F}_1:=
	\bma
	x_1&*&*\\
	x_1&*&*\\
	x_1&*&*\\
	*&x_2&*\\
	*&x_2&*\\
	*&*&x_3\\
	*&*&x_3
	\ema$. 
	We denote $M\sim N$ as equivalent matrix $M$ and $N$.  Namely, there exist invertible product matrix $U$ and $V$ such that $UMV = N$. 
	Up to row permutation of $A$, we have 
	\begin{eqnarray}
		\label{eq:}
		A\sim A'=\bma
		a_{1,1}&a_{1,2}&a_{1,3}&a_{1,4}&a_{1,5}&a_{1,6}&a_{1,7}\\
		a_{1,1}'&a_{2,2}&a_{1,3}&a_{3,4}&a_{3,5}&a_{3,6}&a_{3,7}\\
		a_{4,1}&a_{1,2}'&a_{1,3}&a_{3,4}'&a_{6,5}&a_{2,6}'&a_{5,7}'\\
		a_{1,1}&a_{2,2}&a_{2,3}&a_{2,4}&a_{1,5}'&a_{2,6}&a_{2,7}\\
		a_{1,1}'&a_{1,2}&a_{2,3}&a_{4,4}&a_{4,5}&a_{4,6}&a_{3,7}'\\
		a_{4,1}&a_{1,2}'&a_{5,3}&a_{2,4}'&a_{5,5}&a_{3,6}'&a_{5,7}\\
		a_{4,1}'&a_{7,2}&a_{2,3}'&a_{7,4}&a_{3,5}'&a_{1,6}'&a_{7,7}\\
		a_{4,1}'&a_{7,2}'&a_{2,3}'&a_{3,4}'&a_{6,5}&a_{8,6}&a_{1,7}'\\
		a_{9,1}&a_{7,2}'&a_{1,3}'&a_{4,4}'&a_{5,5}'&a_{8,6}'&a_{2,7}'\\
		a_{9,1}&a_{7,2}&a_{1,3}'&a_{7,4}'&a_{4,5}'&a_{2,6}'&a_{5,7}'\\
		a_{9,1}'&a_{2,2}'&a_{5,3}'&a_{1,4}'&a_{6,5}'&a_{4,6}'&a_{7,7}'
		\ema.
	\end{eqnarray}	
	One can verify that entries in the lower-right $8\times4$ submatrix of $A'$ have multiplicity one. So the entries of $A'$ with multiplicity two must exist in the lower-left $8\times2$ submatrix of $A'$. Because   there exists a $7\times3$ submatrix of $A$ equivalent $\mathcal{F}_1$,  two entries with multiplicity two in the lower-left $8\times2$ submatrix of $A'$ are not in the same row. In the lower-left $8\times2$ submatrix of $A'$, the entries with multiplicity two are $a_{4,1}',a_{9,1}, a_{7,2}$ and $a_{7,2}'$, respectively. They are in the $4\times2$ submatrix $\bma a_{4,1}'&a_{7,2}\\
	a_{4,1}'&a_{7,2}'\\
	a_{9,1}&a_{7,2}'\\
	a_{9,1}&a_{7,2}\ema$ of $A'$. However, the entries $a_{4,1}',a_{9,1}$ are in the same row with the entries $a_{7,2}$ and $a_{7,2}'$. So it is a contradiction with the fact that two entries with multiplicity two are not in the same row. Thus the $7\times3$ submatrix does not exist in $A$. We have excluded case (i).

	(ii) If $b_3\neq a_{1,3}'$, then there must exist four entries of $\bma b_1&b_2&b_3&b_4&b_5&b_6&b_7\ema$ such that a row containing the four entries is orthogonal to at least eight rows of $A$. Up to permutation, there exists an $8\times4$ submatrix $X_{A'}$ of $A$ equivalent to $\mathcal{F}_2$, where $\mathcal{F}_2:=\bma
	y_1&*&*&*\\
	y_1&*&*&*\\
	*&y_2&*&*\\
	*&y_2&*&*\\
	*&*&y_3&*\\
	*&*&y_3&*\\
	*&*&*&y_4\\
	*&*&*&y_4
	\ema$. 
	In column $A_4,A_5,A_6,A_7$ of $A$, we obtain that entries with multiplicity two are $a_{3,4}',a_{6,5}, a_{2,6}'$ and $a_{5,7}'$, respectively. From \eqref{eq:11times7uom}, we obtain that the entries $a_{3,4}',a_{6,5}$ and $a_{2,6}', a_{5,7}$ are in row ${R}_6, {R}_{8}$ and ${R}_6, {R}_{10}$ of $A$, respectively.  Because the entries $a_{3,4}',a_{6,5}, a_{2,6}', a_{5,7}'$ are in ${R}_6$,  we obtain that $X_{A'}$ is a submatrix of $\bma A_1&A_2&A_3&A_k\ema$, where $k\in\{4,5,6,7\}$. Because $a_{3,4}',a_{6,5}$ are in $R_6,R_8$ and $a_{2,6}',a_{5,7}$ are in $R_6,R_{10}$,  we obtain that there exist two cases (ii.a) $k=4$ or $5$ and (ii.b) $k=6$ or $7$. 
	
	(ii.a) When $k=4$ or $5$, up to row permutation of $A$, we have 
	\begin{eqnarray}
		A\sim B_1=	\bma
		a_{4,1}&a_{1,2}'&a_{1,3}&a_{3,4}'&a_{6,5}&a_{2,6}'&a_{5,7}'\\
		a_{4,1}'&a_{7,2}'&a_{2,3}'&a_{3,4}'&a_{6,5}&a_{8,6}&a_{1,7}'\\
		a_{1,1}&a_{1,2}&a_{1,3}&a_{1,4}&a_{1,5}&a_{1,6}&a_{1,7}\\
		a_{1,1}&a_{2,2}&a_{2,3}&a_{2,4}&a_{1,5}'&a_{2,6}&a_{2,7}\\
		a_{1,1}'&a_{2,2}&a_{1,3}&a_{3,4}&a_{3,5}&a_{3,6}&a_{3,7}\\
		a_{1,1}'&a_{1,2}&a_{2,3}&a_{4,4}&a_{4,5}&a_{4,6}&a_{3,7}'\\
		a_{4,1}&a_{1,2}'&a_{5,3}&a_{2,4}'&a_{5,5}&a_{3,6}'&a_{5,7}\\
		a_{4,1}'&a_{7,2}&a_{2,3}'&a_{7,4}&a_{3,5}'&a_{1,6}'&a_{7,7}\\
		a_{9,1}&a_{7,2}'&a_{1,3}'&a_{4,4}'&a_{5,5}'&a_{8,6}'&a_{2,7}'\\
		a_{9,1}&a_{7,2}&a_{1,3}'&a_{7,4}'&a_{4,5}'&a_{2,6}'&a_{5,7}'\\
		a_{9,1}'&a_{2,2}'&a_{5,3}'&a_{1,4}'&a_{6,5}'&a_{4,6}'&a_{7,7}'
		\ema.
	\end{eqnarray}
	Then in $B_1$, a row containing entries $b_1,b_2,b_3$ of $\bma b_1&b_2&b_3&b_4&b_5&b_6&b_7\ema$ is orthogonal to at least six rows of the lower-left $9\times3$ submatrix $X_{B_1}$ of $B_1$. Then  there should exist three entries with multiplicity two in six different rows of $X_{B_1}$. In $X_{B_1}$, the entries with multiplicity two are $a_{1,1},a_{1,1}',a_{9,1}, a_{1,2}, a_{2,2}, a_{7,2}, a_{1,3}, a_{2,3}, a_{1,3}'$. They are in a $7\times3$ submatrix $B_1':=\bma a_{1,1}&a_{1,2}&a_{1,3}\\
	a_{1,1}&a_{2,2}&a_{2,3}\\
	a_{1,1}'&a_{2,2}&a_{1,3}\\
	a_{1,1}'&a_{1,2}&a_{2,3}\\
	*&a_{7,2}&*\\
	a_{9,1}&*&a_{1,3}'\\
	a_{9,1}&a_{7,2}&a_{1,3}'\\
	\ema$ of $B_1$. 
	So we obtain a fact that there should exist a $6\times3$ submatrix of $B_1'$ equivalent to a submatrix 
	$\bma 
	y_1&*&*\\
	y_1&*&*\\
	*&y_2&*\\
	*&y_2&*\\
	*&*&y_3\\
	*&*&y_3
	\ema$ of $\mathcal{F}_2$ up to permutation. 
	However, in the first four rows of $B_1'$, there are no two entries with multiplicity two in different rows. It is a contradiction with the fact that we mentioned above. 
	
	(ii.b) When $k=6$ or $7$, up to row permutation of $A$, we have 
	\begin{eqnarray}
		A\sim	B_2=\bma
		a_{4,1}&a_{1,2}'&a_{1,3}&a_{3,4}'&a_{6,5}&a_{2,6}'&a_{5,7}'\\
		a_{9,1}&a_{7,2}&a_{1,3}'&a_{7,4}'&a_{4,5}'&a_{2,6}'&a_{5,7}'\\
		a_{1,1}&a_{1,2}&a_{1,3}&a_{1,4}&a_{1,5}&a_{1,6}&a_{1,7}\\
		a_{1,1}&a_{2,2}&a_{2,3}&a_{2,4}&a_{1,5}'&a_{2,6}&a_{2,7}\\	a_{1,1}'&a_{2,2}&a_{1,3}&a_{3,4}&a_{3,5}&a_{3,6}&a_{3,7}\\
		a_{1,1}'&a_{1,2}&a_{2,3}&a_{4,4}&a_{4,5}&a_{4,6}&a_{3,7}'\\
		a_{4,1}&a_{1,2}'&a_{5,3}&a_{2,4}'&a_{5,5}&a_{3,6}'&a_{5,7}\\
		a_{4,1}'&a_{7,2}&a_{2,3}'&a_{7,4}&a_{3,5}'&a_{1,6}'&a_{7,7}\\
		a_{4,1}'&a_{7,2}'&a_{2,3}'&a_{3,4}'&a_{6,5}&a_{8,6}&a_{1,7}'\\
		a_{9,1}&a_{7,2}'&a_{1,3}'&a_{4,4}'&a_{5,5}'&a_{8,6}'&a_{2,7}'\\
		a_{9,1}'&a_{2,2}'&a_{5,3}'&a_{1,4}'&a_{6,5}'&a_{4,6}'&a_{7,7}'
		\ema.
	\end{eqnarray}
	Similarly, there should exist three entries with multiplicity two in the different rows of the lower-left $9\times3$ submatrix of $B_2$. By using the same method in case (ii.a), we obtain that there exist no such entries. Then the $8\times4$ submatrix does not exist in $A$. So case (ii) has been excluded. 
	
	Hence 
	$A$ does not have the $7\times3$ or $8\times4$ submatrix such that it is equivalent to $\mathcal{F}_1$ or $\mathcal{F}_2$, respectively. It is a contradiction with the above assumption below \eqref{eq:p3} that $A$ is not a UOM. Thus, $A$ is an $11\times7$ UOM. 
\end{proof}
\section{The proof of Theorem \ref{thm:distinguish}}
We first introduce some definitions of distinguishability. A set of orthogonal states is locally distinguishable if there exists a sequence of LOCC distinguishing the states. A measurement performed to distinguish a set of orthogonal states is called an orthogonality-preserving measurement if the states remain orthogonal after the measurement. Local distinguishability sufficiently ensures local reducibility \cite{PhysRevLett.122.040403}. A set of orthogonal quantum states is called a locally reducible set if it is possible to distinguish one or more states from the set by orthogonality-preserving local measurement. Then we show the proof of Theorem \ref{thm:distinguish}.

\label{pf:theorem7}
\begin{proof}	
	From Theorem  \ref{thm:11times7}, we have eleven orthogonal product vectors. We list them as follows.
	
	\begin{eqnarray}
		\label{eq:product}
		\ket{\varphi_1}=\ket{a_{1,1},a_{1,2},a_{1,3},a_{1,4},a_{1,5},a_{1,6},a_{1,7}},\nonumber
		&&
		\\
		\ket{\varphi_2}=\ket{a_{1,1},a_{2,2},a_{2,3},a_{2,4},a_{1,5}',a_{2,6},a_{2,7}},\nonumber
		&&
		\\
		\ket{\varphi_3}=\ket{a_{1,1}',a_{2,2},a_{1,3},a_{3,4},a_{3,5},a_{3,6},a_{3,7}},\nonumber
		&&
		\\
		\ket{\varphi_4}=\ket{	a_{1,1}',a_{1,2},a_{2,3},a_{4,4},a_{4,5},a_{4,6},a_{3,7}'},\nonumber
		&&
		\\
		\ket{\varphi_5}=\ket{a_{4,1},a_{1,2}',a_{5,3},a_{2,4}',a_{5,5},a_{3,6}',a_{5,7}},\nonumber
		&&
		\\
		\ket{\varphi_6}=\ket{a_{4,1},a_{1,2}',a_{1,3},a_{3,4}',a_{6,5},a_{2,6}',a_{5,7}'},\nonumber
		&&
		\\
		\ket{\varphi_7}=\ket{a_{4,1}',a_{7,2},a_{2,3}',a_{7,4},a_{3,5}',a_{1,6}',a_{7,7}},\nonumber
		&&
		\\
		\ket{\varphi_8}=\ket{	a_{4,1}',a_{7,2}',a_{2,3}',a_{3,4}',a_{6,5},a_{8,6},a_{1,7}'},\nonumber
		&&
		\\
		\ket{\varphi_9}=\ket{	a_{9,1},a_{7,2}',a_{1,3}',a_{4,4}',a_{5,5}',a_{8,6}',a_{2,7}'},\nonumber
		&&
		\\
		\ket{\varphi_{10}}=\ket{	a_{9,1},a_{7,2},a_{1,3}',a_{7,4}',a_{4,5}',a_{2,6}',a_{5,7}'},\nonumber
		&&
		\\
		\ket{\varphi_{11}}=\ket{	a_{9,1}',a_{2,2}',a_{5,3}',a_{1,4}',a_{6,5}',a_{4,6}',a_{7,7}'}.
	\end{eqnarray} 
	Denote the set of these orthogonal product vectors as $\Phi$. We may assume that any $\ket{\varphi_i}\in\mathcal{H}_{C_1}\otimes\cdots\otimes\mathcal{H}_{C_7}=(\mathbb{C}^{2})^{\otimes7}$. Then we consider whether the set $\Phi$ is locally reducible when every $\ket{\varphi_i}$ is a bipartite product vector. Let $\mathcal{S}\subset\{1,\cdots,7\}$ be a subset. Denote $\overline{\mathcal{S}}$ as the complement of $\mathcal{S}$. Define the composite system as $C_{\mathcal{S}}:=\otimes_{j\in\mathcal{S}} C_j$  supported on the space $\otimes_{j\in\mathcal{S}}\mathcal{H}_{C_j}$ and $C_{\overline{\mathcal{S}}}:=\otimes_{j\in\overline{\mathcal{S}}} C_j$. If the set $\Phi$ is locally reducible, then the set $\Phi$ can be splited into two disjoint subsets $P$ and $Q$ such that the elements of $P$ are orthogonal to that of $Q$ on system $C_{\mathcal{S}}$, where $P\cup Q=\Phi$. 
	Denote $I_{C_j}$ as an identity matrix on system $C_j$. We consider every $\ket{\varphi_i}$ as a bipartite state on systems $C_\mathcal{S}$ and  $C_{\overline{\mathcal{S}}}$. 
	
	When $\mathcal{S}=\{m,n\}$, we have $\overline{\mathcal{S}}=\{1,\cdots,7\}\backslash\{m,n\}$, where $m,n\in\{1,\cdots,7\}$ and $m\neq n$. We can write every $\ket{\varphi_i}$ as
	\begin{eqnarray}
		\label{eq:varphi}
		\ket{\varphi_i}:=\ket{a_i}_{C_{\overline{\mathcal{S}}}}\otimes\ket{b_i,c_i}_{C_\mathcal{S}}, \quad i\in\{1,\cdots,11\}.
	\end{eqnarray} 
	
	We assume that the set $\Phi$ is locally distinguishable on systems $C_\mathcal{S}$ and  $C_{\overline{\mathcal{S}}}$. Then it is locally reducible. We obtain that the set $\Phi$ can be splitted into two disjoint subsets $P$ and $Q$ on system $C_{\mathcal{S}}$ or $C_{\overline{\mathcal{S}}}$. 
	Denote $|P|$ and $|Q|$ as the number of the elements in sets $P$ and $Q$, respectively. From \eqref{eq:product}, we obtain that $P\cup Q=\Phi$ and $|P|+|Q|=11$. Without loss of generality, we may assume that $|P|<|Q|$. In the following, we show two cases (i) and (ii).  
	
	(i) We assume that the set $\Phi$ is splitted into two disjoint subsets $P$ and $Q$ on system $C_{\mathcal{S}}$. It implies that the elements of the set $P$ are orthogonal to the elements of the set $Q$ on system $C_{\mathcal{S}}$. From \eqref{eq:p3} and \eqref{eq:varphi}, we obtain that $\ket{\varphi_i}$ is orthogonal to at most five other product vectors in \eqref{eq:product} on system $C_{\mathcal{S}}$. It implies $\max\{|P|,|Q|\}\leq5$. We obtain that $|P|+|Q|=10$. It is a contradiction with the fact that 
	$|P|+|Q|=11$. Thus set $\Phi$ is not splitted into two disjoint subsets $P$ and $Q$ on system $C_{\mathcal{S}}$.
	
	(ii) We assume that the set $\Phi$ is splitted into two disjoint subsets $P$ and $Q$ on system $C_{\overline{\mathcal{S}}}$. Since $|P|+|Q|=11$ and $|P|\leq|Q|$, we obtain $|P|\in\{1,2,3,4,5\}$. We consider the number of the elements of the set $P$ in the five following cases (ii.a) - (ii.e). 
	
	(ii.a) When $|P|=1$, we obtain that there exists $\ket{\varphi_i}$ in the set $P$ and remaining product vector $\ket{\varphi_j}$ in the set $Q$, where $j\in\{1,\cdots,11\}\backslash\{i\}$. From \eqref{eq:varphi}, there exists  $\ket{a_i}_{C_{\overline{\mathcal{S}}}}$ is orthogonal to the remaining $\ket{a_j}_{C_{\overline{\mathcal{S}}}}$, where $j\in\{1,\cdots,11\}\backslash\{i\}$. From the definition of UOM, we obtain that orthogonal product vectors $\ket{b_i}$ and $\ket{b_i'}$ exist in $\ket{\varphi_i}$ and $\ket{\varphi_j}$, respectively. There exists $\ket{\varphi_j}$ such that it is orthogonal to $\ket{\varphi_i}$ on system $C_{\mathcal{S}}$. Since there exists exactly one orthogonal pair in product vectors in $\eqref{eq:varphi}$, we obtain that $\ket{a_i}_{C_{\overline{\mathcal{S}}}}$ is not orthogonal to $\ket{a_j}_{C_{\overline{\mathcal{S}}}}$. It is a contradiction with the fact that there exists  $\ket{a_i}_{C_{\overline{\mathcal{S}}}}$ is orthogonal to the remaining $\ket{a_j}_{C_{\overline{\mathcal{S}}}}$.
	
	(ii.b) When $|P|=2$, we obtain that there exist two $\ket{\varphi_i}$ and $\ket{\varphi_j}$ in the set $P$ and remaining product vectors are in the set $Q$. From \eqref{eq:varphi}, there exist  $\ket{a_i}_{C_{\overline{\mathcal{S}}}}$ and $\ket{a_j}_{C_{\overline{\mathcal{S}}}}$ are orthogonal to the remaining $\ket{a_k}_{C_{\overline{\mathcal{S}}}}$, where $i,j\in\{1,\cdots,11\}$ and $k\in\{1,\cdots,11\}\backslash\{i,j\}$. We obtain that $\ket{\varphi_i}$ and $\ket{\varphi_j}$ are not orthogonal on system $C_{\overline{\mathcal{S}}}$. It implies that $\ket{a_i}$ is not orthogonal to $\ket{a_j}$. Otherwise, when $\ket{a_i}$ is orthogonal to $\ket{a_j}$, it is equivalent to the case of (ii.a). From  \eqref{eq:p3}, we obtain that any two $\ket{a_i}$ and $\ket{a_j}$ are linearly independent when they are not orthogonal. Because $\ket{\varphi_i}$ and $\ket{\varphi_j}$ are orthogonal on system $C_{\mathcal{S}}$, From \eqref{eq:varphi}, we obtain that $\ket{b_i,c_i}$ and $\ket{b_j,c_j}$ are orthogonal. Since any two product vectors has exactly one orthogonal pair, it implies that $\ket{b_i}, \ket{b_j}$ or $\ket{c_i},\ket{c_j}$ are orthogonal. If we assume that $\ket{b_i}, \ket{b_j}$ ($\ket{c_i},\ket{c_j}$) are orthogonal, then we obtain that $\ket{c_i},\ket{c_j}$ ($\ket{b_i}, \ket{b_j}$) are not orthogonal. We obtain that there exists a product vector $\ket{\varphi_k}$ such that $\ket{c_k} (\ket{b_k})$ is orthogonal to $\ket{c_i} (\ket{b_i})$. It implies that $\ket{a_i}$ is not orthogonal to $\ket{a_k}$. So it violates the fact that $\ket{\varphi_i}$ and $\ket{\varphi_j}$ are not orthogonal on system $C_{\overline{\mathcal{S}}}$. 
	
	(ii.c) When $|P|=3$, we obtain that there exist three $\ket{\varphi_i}$, $\ket{\varphi_j}$ and $\ket{\varphi_k}$ in the set $P$ and remaining product vectors are in the set $Q$. Because $\ket{\varphi_i}$, $\ket{\varphi_j}$ and $\ket{\varphi_k}$ of the set $P$ are orthogonal to any $\ket{\varphi_l}$ of the set $Q$ on system $C_{\overline{{\mathcal{S}}}}$, they are not orthogonal on system $C_{\mathcal{S}}$. From \eqref{eq:varphi}, it implies that $\ket{b_i,c_i},\ket{b_j,c_j}$ and $\ket{b_k,c_k}$ are not orthogonal to any $\ket{b_l,c_l}$, where $l\in\{1,\cdots,11\}\backslash\{i,j,k\}$. To satisfy this condition, we obtain that one of $\ket{b_i}, \ket{b_j}$ and $\ket{b_k}$ is orthogonal to the other. Without loss of generality, we assume $\ket{b_i}$ is orthogonal to $\ket{b_j}$ and $\ket{b_k}$. Then we have $\ket{b_j}=\ket{b_k}=\ket{b_i'}$. Similarly, we may assume that $\ket{c_i}$ is orthogonal to $\ket{c_j}$ and $\ket{c_k}$. Then we have $\ket{c_j}=\ket{c_k}=\ket{c_i'}$. So there are four orthogonal pairs in $\ket{\varphi_i}$, $\ket{\varphi_j}$ and $\ket{\varphi_k}$ of the set $P$. It violates the fact that any two product vectors have exactly one orthogonal pair.
	
	(ii.d) When $|P|=4$, we obtain that there exist four $\ket{\varphi_{m_1}},\ket{\varphi_{m_2}},\ket{\varphi_{m_3}}$ and $\ket{\varphi_{m_4}}$ in the set $P$ and remaining product vectors are in the set $Q$. Because $\ket{\varphi_{m_1}},\ket{\varphi_{m_2}},\ket{\varphi_{m_3}}$ and $\ket{\varphi_{m_4}}$ of the set $P$ are orthogonal to any $\ket{\varphi_l}$ of the set $Q$ on system $C_{\overline{{\mathcal{S}}}}$, they are not orthogonal on system $C_{\mathcal{S}}$. From \eqref{eq:varphi}, it implies that $\ket{b_{m_1},c_{m_1}},\ket{b_{m_2},c_{m_2}}, \ket{b_{m_3},c_{m_3}}$ and $\ket{b_{m_4},c_{m_4}}$ are not orthogonal to any $\ket{b_l,c_l}$, where $l\in\{1,\cdots,11\}\backslash\{m_1,m_2,m_3,m_4\}$. To satisfy this condition, we have four cases (a), (b), (c), (d) about $\ket{b_{m_1},c_{m_1}},\ket{b_{m_2},c_{m_2}}, \ket{b_{m_3},c_{m_3}}, \ket{b_{m_4},c_{m_4}}$. For convenience of description, we denote $\{\ket{b_{m_1},c_{m_1}},\ket{b_{m_2},c_{m_2}}, \ket{b_{m_3},c_{m_3}}, \ket{b_{m_4},c_{m_4}}\}$ as a $4\times2$ matrix 
	\begin{eqnarray}
		\label{eq:matrix}
		\bma b_{m_1}&c_{m_1}\\
		b_{m_2}&c_{m_2}\\
		b_{m_3}&c_{m_3}\\
		b_{m_4}&c_{m_4}
		\ema.
	\end{eqnarray}

	(a) Suppose there exist two elements of $\ket{b_{m_1}},\ket{b_{m_2}},\ket{b_{m_3}},\ket{b_{m_4}}$ and two elements of $\ket{c_{m_1}},\ket{c_{m_2}},\ket{c_{m_3}},\ket{c_{m_4}}$ are same, respectively. Then we obtain that they are orthogonal to the remaining two elements, respectively. It implies that the  remaining two elements are same.  So we obtain that the matrix in \eqref{eq:matrix} has eight orthogonal pairs. From \eqref{eq:p3}, we obtain that the matrix in \eqref{eq:matrix} is a submatrix of the left $11\times3$ submatrix in \eqref{eq:11times7uom}. Because two elements of $\ket{b_{m_1}},\ket{b_{m_2}},\ket{b_{m_3}},\ket{b_{m_4}}$ are same and orthogonal to remaining two elements, there exist  five $4\times3$ submatrices 
	\begin{eqnarray}
		\bma
		a_{1,1}&a_{1,2}&a_{1,3}\\
		a_{1,1}&a_{2,2}&a_{2,3}\\
		a_{1,1}'&a_{1,2}&a_{1,3}\\
		a_{1,1}'&a_{2,2}&a_{2,3}
		\ema,\quad
		\bma
		a_{4,1}&a_{1,2}'&a_{5,3}\\
		a_{4,1}&a_{1,2}'&a_{1,3}\\
		a_{4,1}'&a_{7,2}&a_{2,3}'\\
		a_{4,1}'&a_{7,2}'&a_{2,3}'
		\ema,\quad
		\bma
		a_{1,1}&a_{1,2}&a_{1,3}\\
		a_{1,1}'&a_{1,2}&a_{1,3}\\
		a_{4,1}&a_{1,2}'&a_{5,3}\\
		a_{4,1}&a_{1,2}'&a_{1,3}
		\ema,\nonumber
	\end{eqnarray}
	\begin{eqnarray}
		\bma
		a_{4,1}'&a_{7,2}&a_{2,3}'\\
		a_{9,1}&a_{7,2}&a_{1,3}'\\
		a_{4,1}'&a_{7,2}'&a_{2,3}'\\
		a_{9,1}&a_{7,2}'&a_{1,3}'
		\ema,\quad
		\bma
		a_{1,1}&a_{2,2}&a_{2,3}\\
		a_{1,1}'&a_{1,2}&a_{2,3}\\
		a_{4,1}'&a_{7,2}&a_{2,3}'\\
		a_{4,1}'&a_{7,2}'&a_{2,3}'
		\ema.
	\end{eqnarray}
	Because every $4\times2$ submatrix in the above matrices has at most six orthogonal pairs, it violate the fact that the matrix in \eqref{eq:matrix} has eight orthogonal pairs. So this case does not hold.
	
	(b) Suppose there exist two elements of $\ket{b_{m_1}},\ket{b_{m_2}},\ket{b_{m_3}},\ket{b_{m_4}}$ are same and $\ket{c_{m_1}},\ket{c_{m_2}},\ket{c_{m_3}},\ket{c_{m_4}}$ have two orthogonal pairs. We obtain that the matrix in \eqref{eq:matrix} has six orthogonal pairs. From \eqref{eq:11times7uom} and \eqref{eq:p3}, we obtain that the $4\times2$ matrix must be the submatrix of the five $4\times5$ matrices
	\begin{eqnarray}
		\bma
		a_{1,1}&a_{1,4}&a_{1,5}&a_{1,6}&a_{1,7}\\
		a_{1,1}&a_{2,4}&a_{1,5}'&a_{2,6}&a_{2,7}\\
		a_{1,1}'&a_{3,4}&a_{3,5}&a_{3,6}&a_{3,7}\\	a_{1,1}'&a_{4,4}&a_{4,5}&a_{4,6}&a_{3,7}'\\
		\ema,\quad
		\bma
		a_{4,1}&a_{2,4}'&a_{5,5}&a_{3,6}'&a_{5,7}\\
		a_{4,1}&a_{3,4}'&a_{6,5}'&a_{2,6}'&a_{5,7}'\\
		a_{4,1}'&a_{7,4}&a_{3,5}'&a_{1,6}'&a_{7,7}\\	a_{4,1}'&a_{3,4}'&a_{6,5}&a_{8,6}&a_{1,7}'\\
		\ema,\quad
		\bma
		a_{1,2}&a_{1,4}&a_{1,5}&a_{1,6}&a_{1,7}\\
		a_{1,2}&a_{4,4}&a_{4,5}&a_{4,6}&a_{3,7}'\\
		a_{1,2}'&a_{2,4}'&a_{5,5}&a_{3,6}'&a_{5,7}\\	a_{1,2}'&a_{3,4}'&a_{6,5}&a_{2,6}'&a_{5,7}'\\
		\ema,
		\notag
	\end{eqnarray}
	\begin{eqnarray}
		\bma
		a_{7,2}&a_{7,4}&a_{3,5}'&a_{1,6}'&a_{7,7}\\
		a_{7,2}&a_{7,4}'&a_{4,5}'&a_{2,6}'&a_{5,7}'\\
		a_{7,2}'&a_{3,4}'&a_{6,5}&a_{8,6}&a_{1,7}'\\	a_{7,2}'&a_{4,4}'&a_{5,5}'&a_{8,6}'&a_{2,7}'\\
		\ema,\quad
		\bma
		a_{2,3}&a_{2,4}'&a_{5,5}&a_{3,6}'&a_{5,7}\\
		a_{2,3}&a_{3,4}'&a_{6,5}'&a_{2,6}'&a_{5,7}'\\
		a_{2,3}'&a_{7,4}&a_{3,5}'&a_{1,6}'&a_{7,7}\\	a_{2,3}'&a_{3,4}'&a_{6,5}&a_{8,6}&a_{1,7}'\\
		\ema.
	\end{eqnarray}
	Then we obtain that every $4\times2$ submatrix of the above $4\times5$ matrices has at most five orthogonal pairs. It is a contradiction with the fact that there exists a $4\times2$ matrix such that  it has six orthogonal pairs. So this case does not hold.
	
	(c) Suppose there exist two elements of $\ket{c_{m_1}},\ket{c_{m_2}},\ket{c_{m_3}},\ket{c_{m_4}}$ are same and $\ket{b_{m_1}},\ket{b_{m_2}},\ket{b_{m_3}},\ket{b_{m_4}}$ have two orthogonal pairs. This case is similar to (b). So this case does not hold.
	
	(d) Suppose $\ket{b_{m_1}},\ket{b_{m_2}},\ket{b_{m_3}},\ket{b_{m_4}}$ and $\ket{c_{m_1}},\ket{c_{m_2}},\ket{c_{m_3}},\ket{c_{m_4}}$ both have two pairs orthogonal pairs.  Then we obtain that the $4\times2$ matrix in  \eqref{eq:matrix} has four orthogonal pairs. 	
	From  \eqref{eq:11times7uom} and \eqref{eq:p3}, by permutating rows and columns, we obtain that the $4\times2$ matrix must be in the submatrix of the six $11\times2$ matrices
	\begin{eqnarray}
		\bma
		a_{1,4}&a_{1,5}\\
		a_{2,4}&a_{1,5}'\\	
		a_{3,4}&a_{3,5}\\
	a_{4,4}&a_{4,5}\\
		a_{2,4}'&a_{5,5}\\
		a_{3,4}'&a_{6,5}\\
		a_{7,4}&a_{3,5}'\\
		a_{3,4}'&a_{6,5}\\
	a_{4,4}'&a_{5,5}'\\
		a_{7,4}'&a_{4,5}'\\
		a_{1,4}'&a_{6,5}'						
		\ema,
		\bma
	a_{1,4}&a_{1,6}\\
		a_{2,4}&a_{2,6}\\	
		a_{3,4}&a_{3,6}\\
	a_{4,4}&a_{4,6}\\
	a_{2,4}'&a_{3,6}'\\
		a_{3,4}'&a_{2,6}'\\
		a_{7,4}&a_{1,6}'\\
	a_{3,4}'&a_{8,6}\\
	a_{4,4}'&a_{8,6}'\\
	a_{7,4}'&a_{2,6}'\\
	a_{1,4}'&a_{4,6}'
		\ema,
		\bma
	a_{1,4}&a_{1,7}\\
	a_{2,4}&a_{2,7}\\	
	a_{3,4}&a_{3,7}\\
	a_{4,4}&a_{3,7}'\\
	a_{2,4}'&a_{5,7}\\
	a_{3,4}'&a_{5,7}'\\
	a_{7,4}&a_{7,7}\\
	a_{3,4}'&a_{1,7}'\\
	a_{4,4}'&a_{2,7}'\\
	a_{7,4}'&a_{5,7}'\\
	a_{1,4}'&a_{7,7}'	
		\ema,
		\bma
		a_{1,5}&a_{1,6}\\
		a_{1,5}'&a_{2,6}\\	
		a_{3,5}&a_{3,6}\\
		a_{4,5}&a_{4,6}\\
		a_{5,5}&a_{3,6}'\\
		a_{6,5}&a_{2,6}'\\
		a_{3,5}'&a_{1,6}'\\
	a_{6,5}&a_{8,6}\\
	a_{5,5}'&a_{8,6}'\\
		a_{4,5}'&a_{2,6}'\\
		a_{6,5}'&a_{4,6}'
		\ema,
			\bma
		a_{1,5}&a_{1,7}\\
		a_{1,5}'&a_{2,7}\\	
		a_{3,5}&a_{3,7}\\
		a_{4,5}&a_{3,7}'\\
		a_{5,5}&a_{5,7}\\
		a_{6,5}&a_{5,7}'\\
		a_{3,5}'&a_{7,7}\\
		a_{6,5}&a_{1,7}'\\
		a_{5,5}'&a_{2,7}'\\
		a_{4,5}'&a_{5,7}'\\
		a_{6,5}'&a_{7,7}'
		\ema,
		\bma
	a_{1,6}&a_{1,7}\\
	a_{2,6}&a_{2,7}\\	
	a_{3,6}&a_{3,7}\\
	a_{4,6}&a_{3,7}'\\
	a_{3,6}'&a_{5,7}\\
	a_{2,6}'&a_{5,7}'\\
	a_{1,6}'&a_{7,7}\\
	a_{8,6}&a_{1,7}'\\
	a_{8,6}'&a_{2,7}'\\
	a_{2,6}'&a_{5,7}'\\
	a_{4,6}'&a_{7,7}'	
		\ema.
	\end{eqnarray}
By permutating the rows and columns of above matrices, One can verify that any $4\times2$ submatrices of them have at least three orthogonal pairs. It violates the fact that the $4\times2$ matrix has four orthogonal pairs.

(ii.e) When $|P|=5$,  we obtain that there exist five $\ket{\varphi_{m_1}},\ket{\varphi_{m_2}},\ket{\varphi_{m_3}}, \ket{\varphi_{m_4}}$ and $\ket{\varphi_{m_5}}$ in the set $P$ and remaining product vectors are in the set $Q$. Then we obtain that elements of the set $P$ are orthogonal to that of the set $Q$ on system $C_{\overline{\mathcal{S}}}$. From \eqref{eq:varphi}, we obtain that $\ket{b_{m_1},c_{m_1}},\ket{b_{m_2},c_{m_2}},\ket{b_{m_3},c_{m_3}},\ket{b_{m_4},c_{m_4}}$ and $\ket{b_{m_5},c_{m_5}}$ are not orthogonal to any $\ket{b_l,c_l}$, where $l\in\{1,\cdots,11\}\backslash\{m_1,m_2,m_3,m_4,m_5\}$.Denote $\{\ket{b_{m_1},c_{m_1}},\ket{b_{m_2},c_{m_2}},\ket{b_{m_3},c_{m_3}},\ket{b_{m_4},c_{m_4}}, \ket{b_{m_5},c_{m_5}} \}$ as a $5\times2$ matrix 
	\begin{eqnarray}
		\label{eq:ph}
		\bma
		b_{m_1}&c_{m_1}\\
		b_{m_2}&c_{m_2}\\
		b_{m_3}&c_{m_3}\\
		b_{m_4}&c_{m_4}\\
		b_{m_5}&c_{m_5}
		\ema.
	\end{eqnarray} Then they must satisfy one of the two cases (ii.e.i) and (ii.e.ii). 
	
	(ii.e.i) Suppose three of $\ket{b_{m_1}}, \ket{b_{m_2}},\ket{b_{m_3}},\ket{b_{m_4}}$ and $\ket{b_{m_5}}$ are same. We obtain that they are orthogonal to the remaining two elements. It implies that the remaining two elements are same. From \eqref{eq:11times7uom} and  \eqref{eq:p3}, we obtain that two of $\ket{c_{m_1}}, \ket{c_{m_2}},\ket{c_{m_3}},\ket{c_{m_4}}$ and $\ket{c_{m_5}}$ are same and there exist three orthogonal pairs between them. 
	Then we obtain that the $5\times2$ matrix in \eqref{eq:ph} has nine orthogonal pairs. From \eqref{eq:11times7uom}, the $5\times2$ matrix in \eqref{eq:ph} must be one of the four matrices
	\begin{eqnarray}
		\bma
		a_{1,3}&a_{1,4}\\
		a_{1,3}&a_{3,4}\\
		a_{1,3}&a_{3,4}'\\
		a_{1,3}'&a_{4,4}'\\
		a_{1,3}'&a_{7,4}'
		\ema,\quad
		\bma
		a_{1,3}&a_{1,5}\\
		a_{1,3}&a_{3,5}\\
		a_{1,3}&a_{6,5}\\
		a_{1,3}'&a_{5,5}'\\
		a_{1,3}'&a_{4,5}'
		\ema,\quad
		\bma
		a_{1,3}&a_{1,6}\\
		a_{1,3}&a_{3,6}\\
		a_{1,3}&a_{2,6}'\\
		a_{1,3}'&a_{8,6}'\\
		a_{1,3}'&a_{2,6}'
		\ema,\quad
		\bma
		a_{1,3}&a_{1,7}\\
		a_{1,3}&a_{3,7}\\
		a_{1,3}&a_{5,7}'\\
		a_{1,3}'&a_{2,7}'\\
		a_{1,3}'&a_{5,7}'
		\ema.
	\end{eqnarray}
	We obtain that the above $5\times2$ matrices at most have seven orthogonal pairs. It is a contradiction with the fact that $5\times2$ matrix in \eqref{eq:ph} has nine orthogonal pairs. So this case does not hold.
	Similarly, when three of $\ket{c_{m_1}}, \ket{c_{m_2}},\ket{c_{m_3}},\ket{c_{m_4}}$ and $\ket{c_{m_5}}$ are same, this case does not hold.
	
	(ii.e.ii) Suppose two of $\ket{b_{m_1}}, \ket{b_{m_2}},\ket{b_{m_3}},\ket{b_{m_4}}$ and $\ket{b_{m_5}}$ are same. Because $|P|=5$, from \eqref{eq:p3}, we obtain that they have three orthogonal pairs. If  two of $\ket{c_{m_1}}, \ket{c_{m_2}},\ket{c_{m_3}},\ket{c_{m_4}}$ and $\ket{c_{m_5}}$ are same, then we obtain that the $5\times2$ matrix in \eqref{eq:ph} has six orthogonal pairs. 
	
	\begin{eqnarray}
		\label{eq:2121}
		\bma
		a_{7,4}&a_{3,5}'\\
		a_{3,4}&a_{3,5}\\
		a_{3,4}'&a_{6,5}\\
		a_{3,4}'&a_{6,5}\\
		a_{1,4}'&a_{6,5}'
		\ema,\quad
		\bma
		a_{1,4}&a_{1,5}\\
		a_{3,4}&a_{3,5}\\
		a_{3,4}'&a_{6,5}\\
		a_{3,4}'&a_{6,5}\\
		a_{1,4}'&a_{6,5}'
		\ema,\quad
		\bma
		a_{2,4}&a_{2,6}\\
		a_{3,4}&a_{3,6}\\
		a_{3,4}'&a_{2,6}'\\
		a_{3,4}'&a_{8,6}\\
		a_{7,4}'&a_{2,6}'
		\ema,\quad
		\bma
		a_{2,4}'&a_{5,7}\\
		a_{3,4}&a_{3,7}\\
		a_{3,4}'&a_{5,7}'\\
		a_{3,4}'&a_{1,7}'\\
		a_{7,4}'&a_{5,7}'
		\ema,
		\nonumber\\
		\bma
		a_{1,5}'&a_{2,6}\\
		a_{6,5}&a_{2,6}'\\
		a_{6,5}&a_{8,6}\\
		a_{6,5}'&a_{4,6}'\\
		a_{4,5}'&a_{2,6}'
		\ema,\quad
		\bma
		a_{5,5}&a_{5,7}\\
		a_{6,5}&a_{5,7}'\\
		a_{6,5}&a_{1,7}'\\
		a_{6,5}'&a_{7,7}'\\
		a_{4,5}'&a_{5,7}'
		\ema,\quad
		\bma
		a_{2,6}&a_{2,7}\\
		a_{3,6}&a_{3,7}\\
		a_{3,6}'&a_{5,7}\\
		a_{2,6}'&a_{5,7}'\\
		a_{2,6}'&a_{5,7}'
		\ema,\quad
		\bma
		a_{2,6}&a_{2,7}\\
		a_{8,6}'&a_{2,7}'\\
		a_{3,6}'&a_{5,7}\\
		a_{2,6}'&a_{5,7}'\\
		a_{2,6}'&a_{5,7}'
		\ema.
	\end{eqnarray}
	Then by permutating the two columns of the above matrices, we obtain the remaining eight $5\times2$ matrices. It implies that the remaining matrices have same orthogonal pairs with the matrices in \eqref{eq:2121}. By investigating the number of the  orthogonal pairs in \eqref{eq:2121}, we obtain that they have at most five orthogonal pairs. It violate the fact that $5\times2$ matrix in \eqref{eq:ph} has six orthogonal pairs. Thus, we obtain that two cases in
	(ii.e) do not hold.

	Thus, we obtain that the set $\Phi$ is locally indistinguishable on systems $C_{\mathcal{S}}$ and $C_{\overline{{\mathcal{S}}}}$ when $\mathcal{S}=\{m,n\}$. 
\end{proof}

\section{The proof of Lemma \ref{le:eqodd}}
\label{le:lemma5}
Before we proof the  Lemma \ref{le:eqodd}, we introduce some notations and facts. 
Denote $\mathcal{O}(m,n)$ as the subset of the set consisting of all  $m\times n$ orthogonal matrices such that $\mathcal{O}(m,n)$ contains at least one UOM. Denote $A=[a_{i,j}]\in\mathcal{O}(m,n)$ and $A:=\bma A_1&A_2&\cdots&A_n\ema$, where $A_j$ is the $j$-th column of $A$. Denote $\s(A_j)$ as the number of independent vector variables in $A_j$ and $p_j$ the number of all orthogonal pairs in $A_j$. Recall the definition of $\textit{multiplicity}$ $\m(x,X)$. Suppose $X=[x_{i,j}]\in \mathcal{M}(m,n)$, where $\mathcal{M}(m,n)$ is the set of all $m\times n$ matrices. Denote $\m(x,X)$ of $x$ in $X$ as 
the number of pairs $(i, j)$ such that $x_{i,j}=x$. We shall simplify this notation $\m(x,X)$ by writing just $\m(x)$. Denote $A=[a_{i,j}]\in\mathcal{O}(m,n)$ as a UOM, then $a_{i,j}'$ occurs in $A$ for all $i,j$. In addition, we define $p_j$ in the above way as 
\begin{eqnarray}
	\label{eq:pj}
	p_j=\sum_{i=1}^{\s(A_j)}\m(a_{i,j})\m(a_{i,j}'),
\end{eqnarray} 
where the summation is over all pairs $\{a_{i,j},a_{i,j}'\}$ in $A_j$. Ref. \cite{Chen_2018} has shown the fact that 
\begin{eqnarray}
	\label{eq:inequalityp}
	\sum_jp_j\geq\frac{m(m-1)}{2}.
\end{eqnarray}
In the following, we show the proof of Lemma \ref{le:eqodd}.

\begin{proof}
	From Lemma \ref{le:functionp} (i) and the definition of $p_j$ in \eqref{eq:pj}, we obtain that for each column the minimum $p_j=1\times1+\cdots+1\times1=\frac{q+1}{2}$. So we obtain that the minimum number of orthogonal pairs of $q$-qubit UOMs with $q+1$ states is $\frac{q(q+1)}{2}$, where $q$ is odd. For example, it has been shown in \cite{dms03} that there exists a $(q+1)\times q$ UOM 
	$\bma
	0&0&0&\cdots&0&0&\cdots&0&0\\
	1&\psi_1&\psi_2&\cdots&\psi_{\frac{q-1}{2}}&\psi_{\frac{q-1}{2}}'&\cdots&\psi_2'&\psi_1'\\
	\psi_1'&1&\psi_1&\cdots&\psi_{\frac{q-1}{2}-1}&\psi_{\frac{q-1}{2}}&\cdots&\psi_3'&\psi_2'\\
	\vdots&\vdots&\vdots&\vdots&\vdots&\vdots&\vdots&\vdots&\vdots\\
	\psi_1&\psi_2&\psi_3&\cdots&\psi_{\frac{q-1}{2}}'&\psi_{\frac{q-1}{2}-1}'&\cdots&\psi_1'&1
	\ema$ such that each column has exactly $\frac{q+1}{2}$ orthogonal pairs. Then the sum of orthogonal pairs of the $(q+1)\times q$ UOM is $\frac{(q+1)q}{2}$. 
	So every $(q+1)\times q$ UOM with minimum size corresponds to a complete graph with $q+1$ vertices. We have proven the first claim of this lemma.
	
	Any two rows have exactly one orthogonal pair in all $(q+1)\times q$ UOMs. It implies that one vertex only appears in one edge. So the graphs of all columns of $(q+1)\times q$ UOMs are {isomorphic}. Thus $(q+1)\times q$ UOMs can be constructed with exactly one sort of {nonisomorphic} graph, where $q$ is odd. Hence we have proven the last claim of this lemma.
\end{proof}

\section{The proof of Theorem \ref{le:eqsort}}
\label{eq:pf}
\begin{proof}
	We show $6\times4$, $8\times6$ and $11\times8$ UOMs in case $(a), (b)$ and $(c)$, respectively. 
	$(a)$ We consider $6\times4$ UOMs. It has been proven that there exists only one $6\times4$ UOM in \cite{Johnston_2014}. We denote $B=[b_{i,j}]$ as the $6\times4$ UOM, $1\leq i\leq6, 1\leq j\leq4$. Then we may assume that $B_j$ is the $j$ column of $B$. So we have \begin{eqnarray}B=\bma B_1&B_2&B_3&B_4\ema= 
		\label{uom:64}
		\bma
		b_{1,1}&b_{1,2}&b_{1,3}&b_{1,4}\\
		b_{1,1}&b_{2,2}&b_{1,3}'&b_{2,4}\\
		b_{1,1}'&b_{1,2}&b_{3,3}&b_{3,4}\\
		b_{1,1}'&b_{2,2}&b_{4,3}&b_{3,4}'\\
		b_{5,1}&b_{2,2}'&b_{3,3}'&b_{1,4}'\\
		b_{5,1}'&b_{1,2}'&b_{4,3}'&b_{2,4}'
		\ema,
	\end{eqnarray}
	where $b_{5,1}\neq b_{1,1}, b_{1,1}', b_{2,2}\neq b_{1,2},b_{1,2}', b_{1,3}\neq b_{3,3}, b_{3,3}'\neq b_{4,3}, b_{4,3}'$. Because any two rows of $B$ have exactly one orthogonal pair, we obtain that $B$ corresponds to a complete graph with six vertices. We denote $V_i$ as the vertex corresponding to the $i$-th row of $B$, $1\leq i\leq6$. Then we present the complete graph with six vertices in Fig. \ref{fig:uom64} and the graphs corresponding to cases of orthogonal pairs of column $1-4$ in Fig. \ref{fig:uom64de}.

	\begin{figure}[htb]
		\includegraphics[scale=0.6,angle=0]{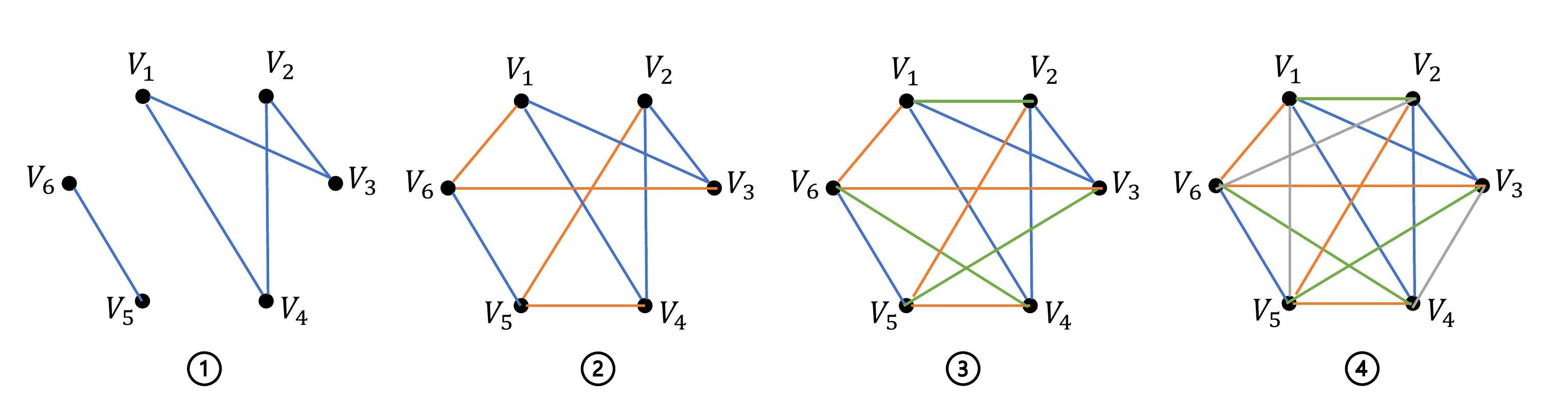}
		\caption{The vertex $V_i$ corresponds to the $i$-th row of $B$, $1\leq i\leq 6$. Graph \textcircled{1} - \textcircled{4}   imply the cases of orthogonal pairs in the first column 1 - 4, respectively. The blue edges in \textcircled{1} show the orthogonal pairs in the first column $B_1$ of $B$. Similarly, the yellow, green and grey edges imply the orthogonal pairs in $B_2, B_3, B_4$, respectively. The graph \textcircled{4} shows that $B$ is a complete graph with six vertices.
		} 
		\label{fig:uom64}
	\end{figure}
	
	\begin{figure}[htb]
		\includegraphics[scale=0.6,angle=0]{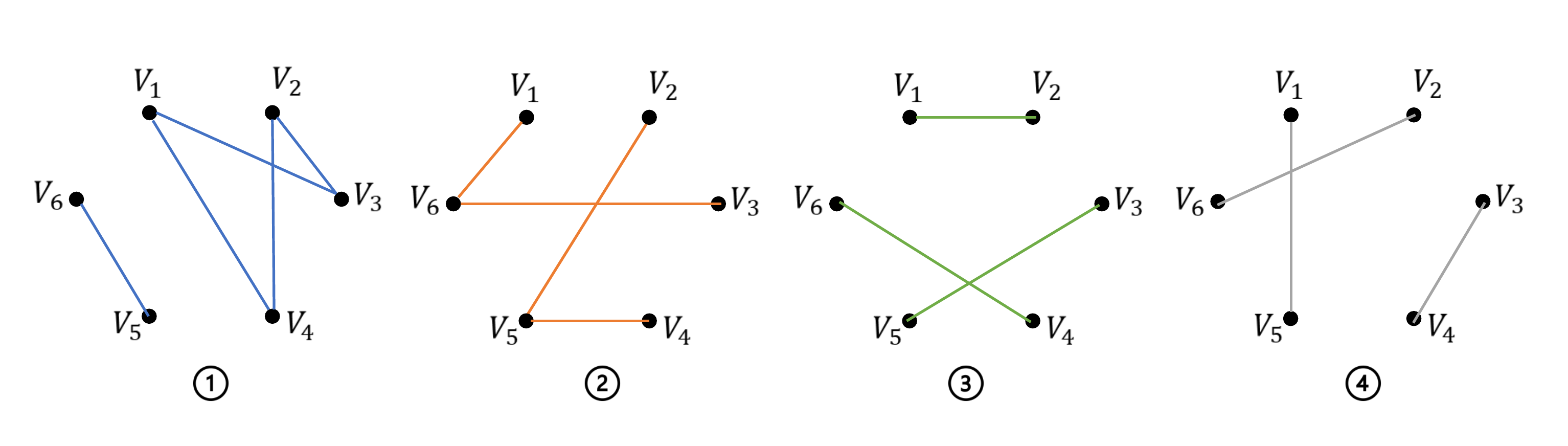}
		\caption{The vertex $V_i$ corresponds to the the $i$-th row of $B$. Graph \textcircled{1} - \textcircled{4} imply the cases of orthogonal pairs in column $1-4$, respectively. } 
		\label{fig:uom64de}
	\end{figure} 
	
	In Fig. \ref{fig:uom64de}, we obtain that \textcircled{3} and \textcircled{4} are two {isomorphic} graphs. So the $6\times4$ UOM can be constructed by three sorts of nonisomorphic graphs. 
	
	$(b)$ Denote $C=[c_{i,j}]$ as an $8\times6$ UOM. From \cite{Johnston_2014}, we rewrite it as  
	\begin{eqnarray}
		\label{uom:86}
		C=	\bma
		c_{1,1}&c_{1,2}&c_{1,3}&c_{1,4}&c_{1,5}&c_{1,6}\\
		c_{1,1}&c_{2,2}&c_{2,3}&c_{1,4}'&c_{2,5}&c_{2,6}\\
		c_{1,1}'&c_{1,2}&c_{2,3}&c_{3,4}&c_{3,5}&c_{3,6}\\
		c_{1,1}'&c_{2,2}&c_{1,3}&c_{3,4}'&c_{4,5}&c_{4,6}\\
		c_{5,1}&c_{5,2}&c_{2,3}'&c_{5,4}&c_{1,5}'&c_{4,6}'\\
		c_{5,1}'&c_{2,2}'&c_{6,3}&c_{6,4}&c_{3,5}'&c_{1,6}'\\
		c_{7,1}&c_{1,2}'&c_{6,3}'&c_{5,4}'&c_{4,5}'&c_{2,6}'\\
		c_{7,1}'&c_{5,2}'&c_{1,3}'&c_{6,4}'&c_{2,5}'&c_{3,6}'\\
		\ema.
	\end{eqnarray}
	
	From \eqref{uom:86}, we have $\bma p_1&p_2&p_3&p_4&p_5&p_6\ema=\bma 6&5&5&4&4&4\ema$. Then we have $\sum_{j=1}^6 p_j=6+5+5+4+4+4=28$. So we obtain that any two rows have exactly one orthogonal pair. $C$ corresponds to a complete graph with eight vertices. denote $V_i$ as the vertex corresponding to the $i$-th row, $1\leq i\leq8$. Then we present the complete graph with six vertices in Fig. \ref{fig:uom86} and the graphs corresponding to cases of orthogonal pairs of column $1-6$ in Fig. \ref{fig:uom86de}. 
	
	\begin{figure}[htb]
		\includegraphics[scale=0.5,angle=0]{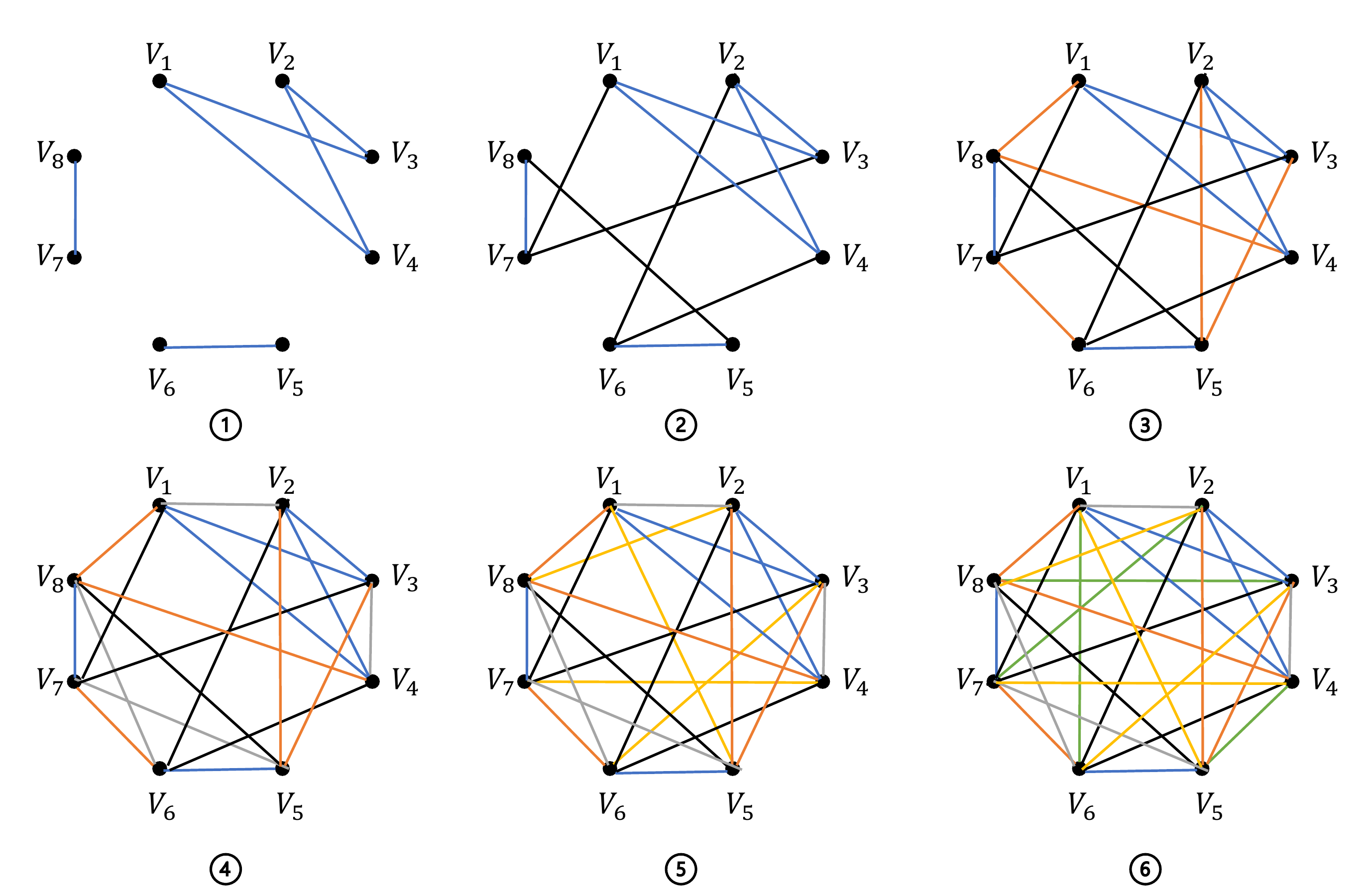}
		\caption{The vertex $V_i$ corresponds to the $i$-th row of $C$, $i=1,\cdots,6$. Graph $\textcircled{1}-\textcircled{6}$ correspond to the cases of orthogonal pairs of the first column $1-6$. Graph \textcircled{6} shows the complete graph corresponding to $C$. } 
		\label{fig:uom86}
	\end{figure}
	
	\begin{figure}[htb]
		\includegraphics[scale=0.5,angle=0]{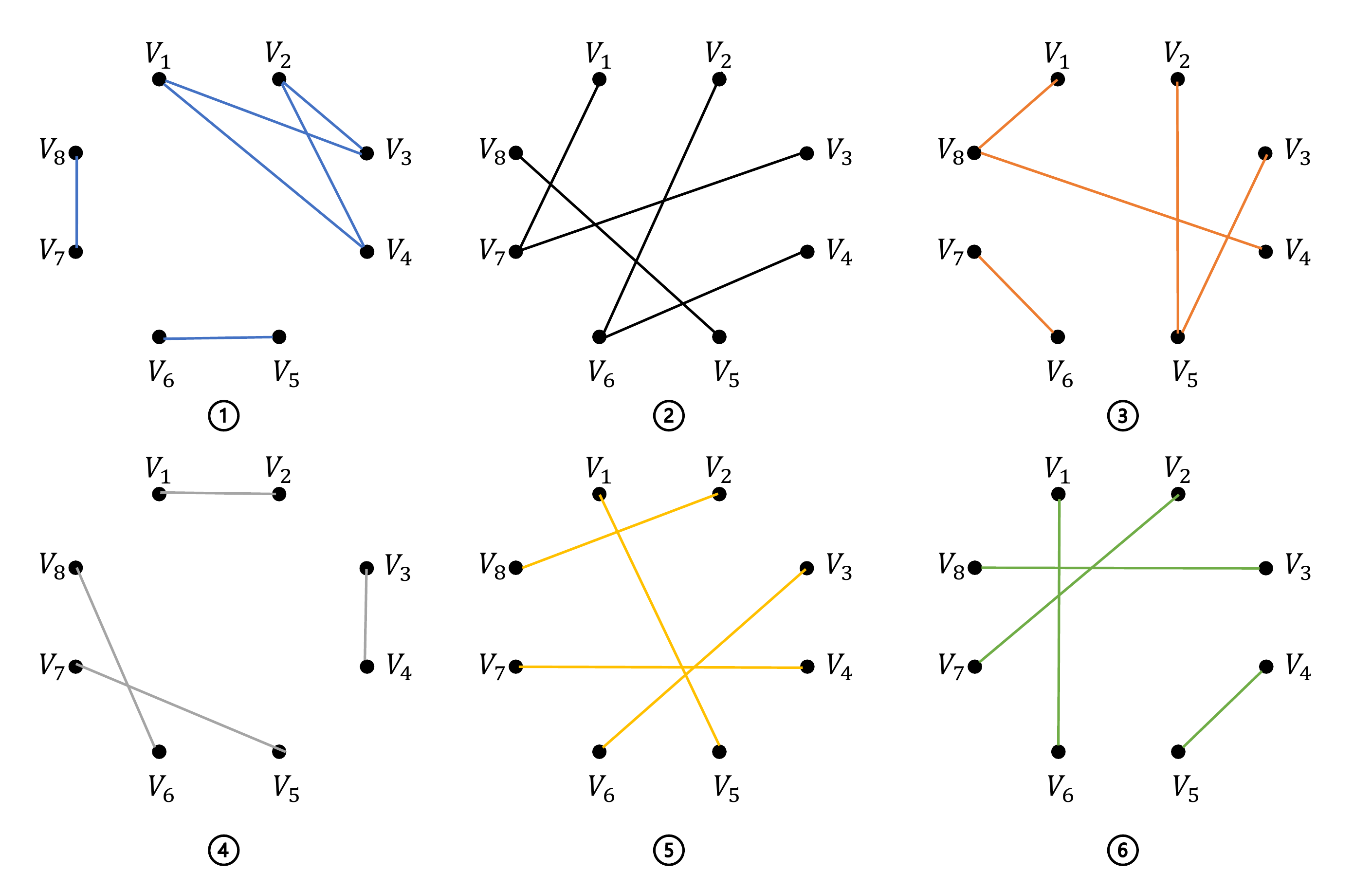}
		\caption{Graph \textcircled{1} - \textcircled{6} correspond to the cases of orthogonal pairs in column $1-6$ of $C$, respectively.} 
		\label{fig:uom86de}
	\end{figure}
	
	In Fig. \ref{fig:uom86de}, we obtain that \textcircled{2}, \textcircled{3} are isomorphic and \textcircled{4}, \textcircled{5}, \textcircled{6} are isomorphic. So the $8\times6$ UOM can be constructed by three sorts of nonisomorphic graphs. 

	$(c)$ It has been shown in \cite{2013The} that the $11\times8$ UOM corresponds a complete graph. Denote $D=[d_{i,j}]$ as the $11\times8$ UOM, then we rewrite it as 
	\begin{eqnarray}
		\label{eq:mat118}
		D=
		\bma
		d_{1,1}&d_{1,2}&d_{1,3}&d_{1,4}&	d_{1,5}&d_{1,6}&d_{1,7}&d_{1,8}\\
		d_{1,1}'&d_{2,2}&d_{2,3}&d_{2,4}&	d_{2,5}&d_{2,6}&d_{2,7}&d_{2,8}\\
		d_{3,1}&d_{2,2}'&d_{3,3}&d_{3,4}&	d_{1,5}'&d_{3,6}&d_{3,7}&d_{3,8}\\
		d_{4,1}&d_{3,2}&d_{4,3}&d_{3,4}'&	d_{3,5}&d_{1,6}'&d_{4,7}&d_{2,8}'\\
		d_{5,1}&d_{1,2}'&d_{2,3}'&d_{4,4}&	d_{4,5}&d_{3,6}'&d_{4,7}'&d_{4,8}\\
		d_{3,1}'&d_{3,2}'&d_{1,3}&d_{4,4}'&	d_{5,5}&d_{2,6}'&d_{1,7}'&d_{5,8}\\
		d_{5,1}'&d_{7,2}&d_{4,3}'&d_{2,4}'&	d_{5,5}'&d_{4,6}&d_{3,7}'&d_{1,8}'\\
		d_{4,1}'&d_{7,2}&d_{3,3}'&d_{1,4}'&	d_{4,5}'&d_{4,6}'&d_{2,7}'&d_{5,8}'\\
		d_{4,1}'&d_{7,2}'&d_{1,3}'&d_{4,4}'&	d_{2,5}'&d_{5,6}&d_{5,7}&d_{3,8}'\\
		d_{5,1}'&d_{7,2}'&d_{3,3}'&d_{1,4}'&	d_{3,5}'&d_{5,6}'&d_{2,7}'&d_{5,8}'\\
		d_{3,1}'&d_{3,2}'&d_{1,3}'&d_{1,4}&	d_{5,5}&d_{2,6}'&d_{5,7}'&d_{4,8}'\\
		\ema.
	\end{eqnarray}
	
	From \eqref{eq:mat118}, we have $\bma p_1&p_2&p_3&p_4&p_5&p_6&p_7&p_8\ema=\bma7&8&8&8&6&6&6&6\ema$. Then we have $\sum_jp_j=55$. We obtain that any two rows has exactly one orthogonal pair. So $D$ corresponds to a complete graph with eleven vertices. We present the complete graph and the graphs corresponding to the cases of orthogonal pairs of column $1-8$ in Fig. \ref{fig:uom118} and Fig. \ref{fig:uom118de}, respectively. 
	\begin{figure}[htb]
		\includegraphics[scale=0.5,angle=0]{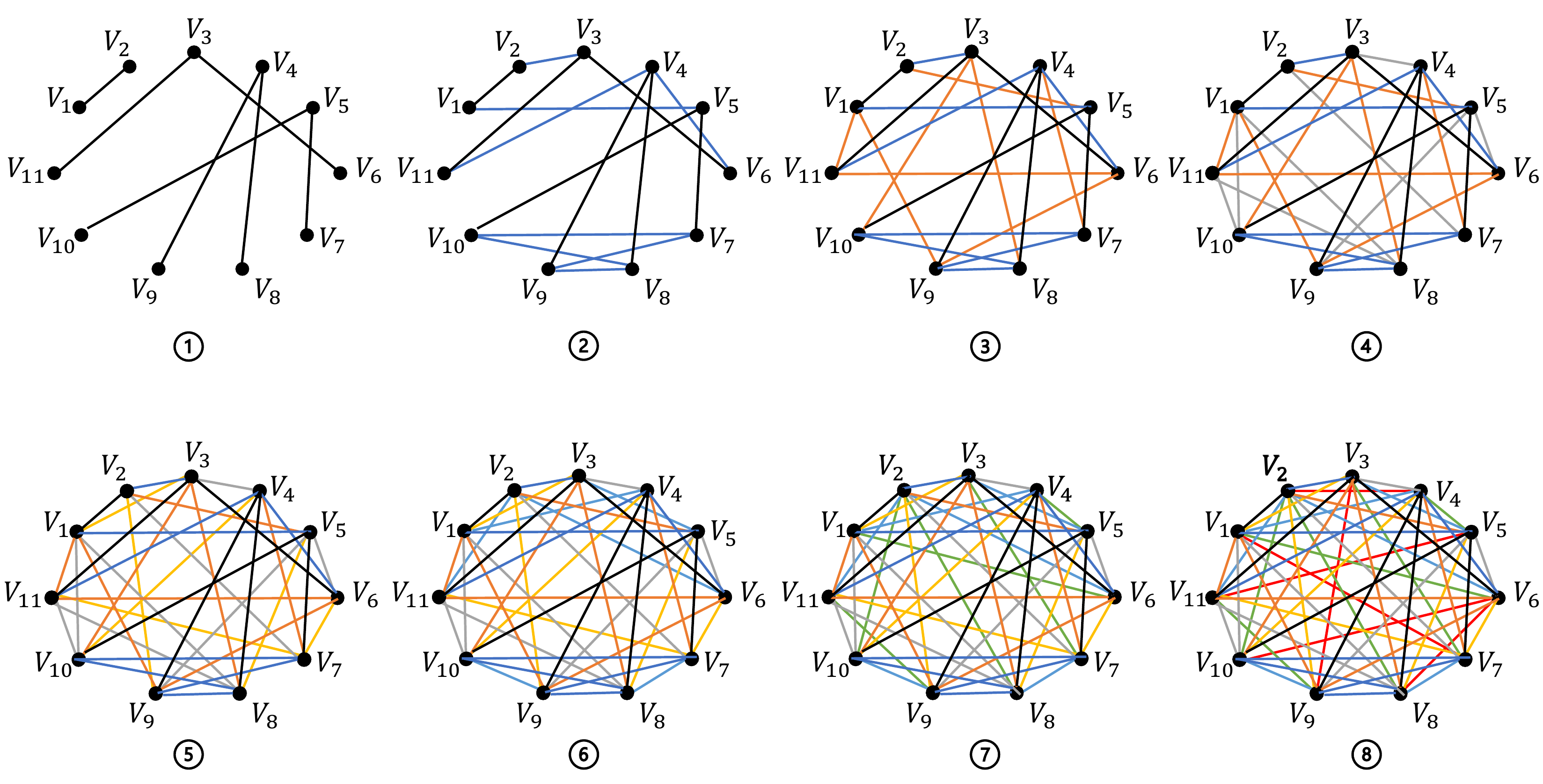}
		\caption{The vertex $V_i$ corresponds to the $i$-th row of $D$. Graph $\textcircled{1}-\textcircled{8}$ correspond to the cases of orthogonal pairs of the first column $1-8$.} 
		\label{fig:uom118}
	\end{figure}
	
	\begin{figure}[htb]
		\includegraphics[scale=0.5,angle=0]{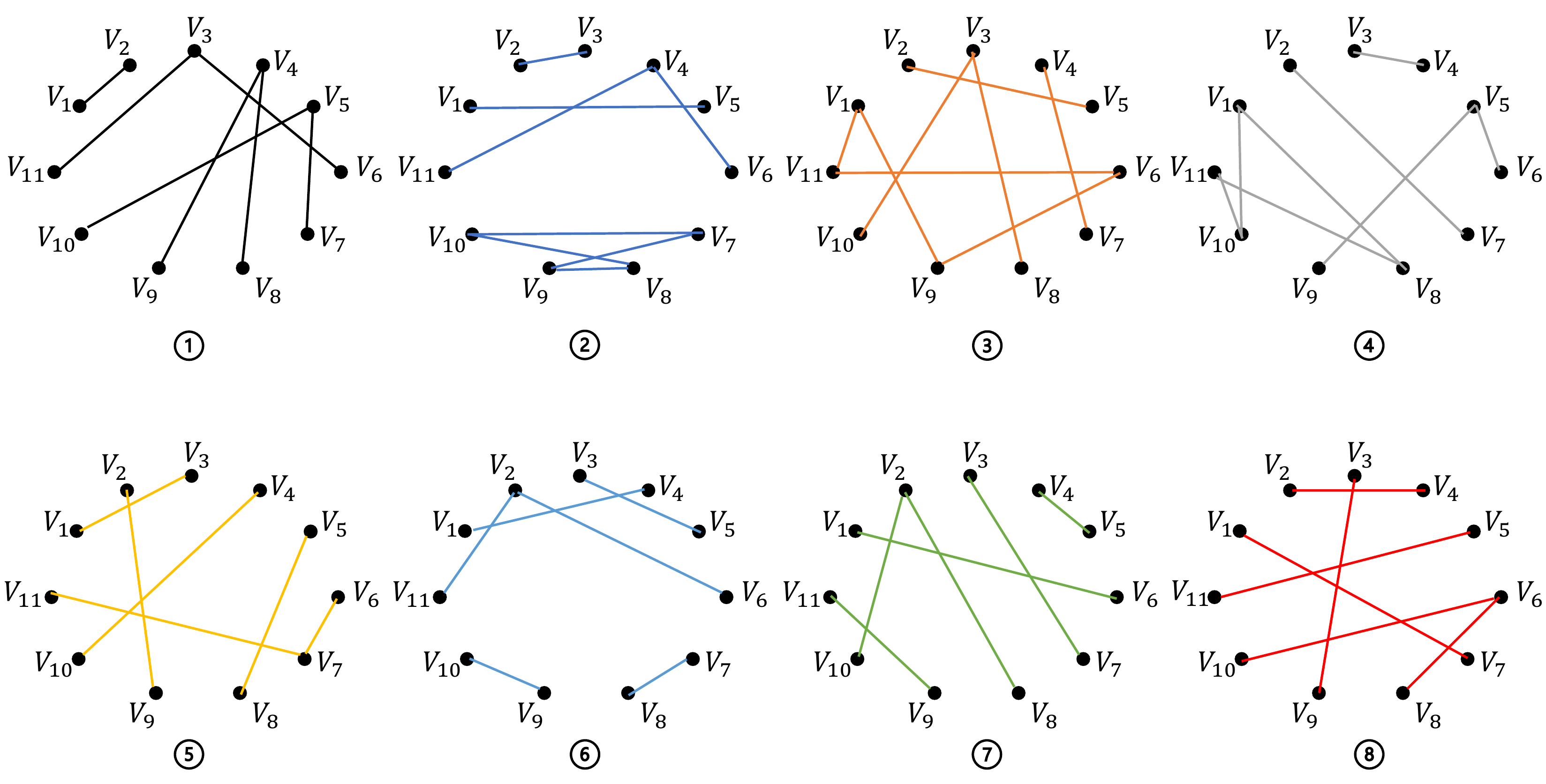}
		\caption{ Graph $\textcircled{1}-\textcircled{8}$ imply the cases of orthogonal pairs in column $1-8$ of $D$, respectively.
		} 
		\label{fig:uom118de}
	\end{figure}
	
	In Fig. \ref{fig:uom118de}, we obtain that \textcircled{2}, \textcircled{3}, \textcircled{4} are isomorphic and \textcircled{5}, \textcircled{6}, \textcircled{7}, \textcircled{8} are isomorphic. 
	So there exists an $11\times8$ UOM such that it can be constructed by three sorts of nonisomorphic graphs. 
	
	
	
	Therefore, 
	from case $(a), (b)$ and $(c)$
	, we obtain that the number of sorts of  {nonisomorphic} graphs for constructing $6\times4$, $8\times6$ and $11\times8$ UOMs in \eqref{uom:64}, \eqref{uom:86}, \eqref{eq:mat118} are three.
\end{proof}
\bibliographystyle{unsrt}

\bibliography{20200615upb}

\end{document}